\everydisplay{
	\abovedisplayskip=.22\baselineskip  plus.2ex minus.2ex\abovedisplayshortskip=-0.33\baselineskip plus.4ex minus.4ex
	\belowdisplayskip=.22\baselineskip plus.2ex minus.2ex\belowdisplayshortskip=0.37\baselineskip plus.2ex minus.2ex
}
\documentclass[journal,10pt]{IEEEtran}
\ifCLASSINFOpdf
\else
\fi

\usepackage[T1]{fontenc}
\usepackage[latin9]{inputenc}
\usepackage{array}
\usepackage{float}
\usepackage{units}
\usepackage{multirow}
\usepackage{amsmath}
\usepackage{amssymb}
\usepackage{graphicx}
\usepackage{subfigure}
\usepackage{esint}
\usepackage{xcolor}
\usepackage{epstopdf}
\usepackage{arydshln}
\usepackage{cases}
\usepackage[T1]{fontenc}
\usepackage[latin9]{inputenc}
\usepackage{array}
\usepackage{units}
\usepackage{multirow}
\usepackage{amsmath}
\usepackage{amssymb}
\usepackage{graphicx}
\usepackage{esint}
\usepackage{xcolor}
\usepackage{enumerate}
\usepackage{epstopdf}
\usepackage{arydshln}
\usepackage{caption}
\usepackage{threeparttable}
\usepackage{cases}
\usepackage{diagbox}
\usepackage{subfigure}
\usepackage{mathtools}
\allowdisplaybreaks[4]

\newcommand{\blockc}[1]{
	{0 \ \cdots \ 0}
}
\usepackage{float}
\usepackage[square, comma, sort&compress, numbers]{natbib}  
\makeatletter

\floatstyle{ruled}
\newfloat{algorithm}{tbp}{loa}
\providecommand{\algorithmname}{Algorithm}
\floatname{algorithm}{\protect\algorithmname}

\makeatother

\usepackage[english]{babel}

\makeatletter
\adddialect\l@ENGLISH\l@english
\makeatother

\begin{document}
	\linespread{0.9}
%
\title{\hspace*{-2.8mm} \LARGE \bf  
	Distributed Resilient Fixed-Time Control for Cooperative Output Regulation of MASs over Directed Graphs under DoS Attacks}
\author{Wenji~Cao, Lu~Liu,~\IEEEmembership{Senior Member,~IEEE,}  Dan~Zhang,~\IEEEmembership{Senior Member,~IEEE,} Gang~Feng,~\IEEEmembership{Fellow,~IEEE}
	\thanks{Wenji Cao, Lu Liu and Gang Feng are with the Department of Mechanical Engineering, City University of Hong Kong, Kowloon, Hong
		Kong (e-mail: wenjicao2-c@my.cityu.edu.hk; luliu@cityu.edu.hk; megfeng@cityu.edu.hk).}
\thanks{Dan Zhang is with the College of Information Engineering, Zhejiang University of Technology, Hangzhou, 310023, China, and also with the Zhejiang Key Laboratory of Intelligent Perception and Control for Complex Systems, Hangzhou 310023, China (danzhang@zjut.edu.cn).}
 \thanks{This work was supported in part by the Research
 	Grants Council of Hong Kong under Grants CityU-11205024 and 
 	CityU-11210424; in part by the National Nature Science Foundation
 	of China under Grants U2541213/62373314/62322315; and in part by the Fundament
 	Research for the Provincial Universities of Zhejiang under Grant RFC2024002. (\textit{Corresponding author: Gang~Feng})}
}
\maketitle


\begin{abstract}
This paper addresses the problem of fixed-time cooperative output regulation for linear multi-agent systems over directed graphs under denial-of-service attacks. 
A novel distributed resilient fixed-time controller is developed that comprises a distributed resilient fixed-time observer taking general directed graphs into consideration, and a distributed resilient fixed-time control law for each agent.
The proposed controller neither depends on Laplacian symmetry nor requires strong connectivity and detail-balanced condition, in contrast to existing distributed resilient fixed-time controllers.
Under the proposed controller, the regulated outputs converge to zero in a fixed time with its upper bound independent of the initial states of the multi-agent system. 
Ultimately, the efficacy of the proposed controller is demonstrated via a simulation example.
\end{abstract}


\begin{IEEEkeywords}
Cooperative output regulation, fixed-time control, directed graphs, denial-of-service attacks, multi-agent systems.   
\end{IEEEkeywords}

\IEEEpeerreviewmaketitle

\newtheorem{lemma}{Lemma}
\newtheorem{theorem}{Theorem}
\newtheorem{assumption}{Assumption}
\newtheorem{remark}{Remark}
\newtheorem{corollary}{Corollary}
\newtheorem{definition}{Definition}
\newtheorem{problem}{Problem}
\newtheorem{property}{Property}

\section{Introduction}\label{sec:1}
Over recent decades, cooperative control of multi-agent systems (MASs) has received significant research interest, owing to its vast range of applications, including power systems, multirobot systems, sensor networks \citep{jadbabaie2003coordination,nagata2002multi,ren2008distributed,ren2011distributed,yang2022overview}.
As one of its typical topics, cooperative output regulation (COR) has been well investigated.
Various distributed control methods have been developed to address the COR problem for linear MASs and nonlinear MASs
\cite{su2011cooperative,su2012general,ding2015348,wang2019robust,song2021distributed,zhou2022lyapunov}.

In real-world applications, communication networks of MASs are frequently susceptible to cyber-attacks. 
Denial-of-service (DoS) attacks, among the most prevalent types of cyber-attacks, disrupt the timely transfer of information by blocking or severing network connectivity \citep{de2015input}. 
There has been significant research on resilient cooperative control of MASs under DoS attacks recently, see, for example, \cite{lu2018distributed,11141750cheng,feng2016distributed,feng2019secure,8307749,10233067,deng2022resilient,deng2020distributed,zhang2022resilient,ZHANG2026112619,tang2020event}.
Particularly, the resilient consensus problem of homogeneous linear MASs is studied over both undirected graphs \citep{lu2018distributed,11141750cheng} and directed graphs
\citep{feng2016distributed,feng2019secure}.
Then, the resilient consensus problem of heterogeneous linear MASs over directed graphs is studied in \cite{8307749,10233067}. 
The problem of resilient COR is addressed for heterogeneous MASs over undirected graphs with linear agent dynamics \citep{deng2022resilient,deng2020distributed} and nonlinear agent dynamics \citep{zhang2022resilient}. The resilient formation control of nonlinear MASs is considered over both undircted graphs \citep{ZHANG2026112619} and directed graphs \citep{tang2020event}.
A critical observation is that the control protocols proposed in the literature mentioned earlier on resilient cooperative control of MASs ensure only the asymptotic or exponential convergence.

However, many practical engineering applications call for finite-time convergence rather than asymptotic or exponential convergence.
As a result, finite-time cooperative control of MASs has attracted significant attention in recent years \cite{wang2010finite,zhang2025finited,furchi2025finite,sarrafan2022resilient,cao2025resilient1,zhang2025finite}.
In particular, the problems of finite-time cooperative control for MASs are studied in \cite{wang2010finite,zhang2025finited,furchi2025finite} while the problems of resilient finite-time cooperative control for MASs subject to DoS attacks are considered in \cite{sarrafan2022resilient, cao2025resilient1,zhang2025finite}.
Nevertheless, there is one drawback for those finite-time cooperative control schemes, i.e., the upper bounds of {their} settling times rely on the initial states of the concerned systems.
Considerable efforts have been made to address this drawback via developing fixed-time cooperative control schemes \citep{liu2024fixed,zuo2015nonsingular,du2020distributed,10453626}.
For example, the fixed-time consensus problem of linear MASs is investigated over both undirected graphs \citep{liu2024fixed} and directed graphs \citep{zuo2015nonsingular}. 
In \cite{du2020distributed}, the same problem of nonlinear MASs is addressed over strongly connected and detail-balanced graphs.
The authors in \cite{10453626} propose a smooth distributed adaptive control scheme to address the fixed-time COR problem of nonlinear MASs over undirected graphs. 
More recently, the problems of resilient fixed-time cooperative control for MASs subject to DoS attacks are investigated \citep{yang2020observer,xu2024fixed,cao2025resilient2,cao2025resilient3}. 
Specifically, in \citep{yang2020observer}, the resilient fixed-time consensus problem is studied for nonlinear MASs over directed graphs subject to connectivity-maintained/broken attacks.  
The resilient fixed-time consensus problem of nonlinear MASs over undirected graphs is tackled by a distributed fixed-time fuzzy control strategy in \cite{xu2024fixed}. 
The resilient fixed-time COR problem of nonlinear MASs is investigated over undirected graphs in \cite{cao2025resilient2}, while the same problem of  heterogeneous linear MASs is considered over strongly connected and detail-balanced directed graphs in \cite{cao2025resilient3}.
However, to the best of our knowledge, the problem of resilient fixed-time cooperative control for MASs over general directed graphs subject to DoS attacks remains to be addressed. This serves as the motivation for our current work.

This paper studies the resilient fixed-time COR problem of heterogeneous linear MASs over general directed graphs under DoS attacks. 
The key challenge  in addressing this problem lies in how to develop a distributed controller that guarantees fixed-time convergence of the regulated outputs in the presence of both directed graphs and DoS attacks.
On one hand, it is noteworthy that the techniques in \cite{yang2020observer}, \cite{xu2024fixed} and \cite{cao2025resilient2}, which are developed for undirected graphs, cannot be adopted for directed graphs because they rely on the symmetric properties of Laplacian matrices. 
On the other hand, the method in \cite{cao2025resilient3} addresses directed graphs but requires the strong connectivity and detail-balanced condition, 
making it inapplicable to general directed graphs due to its reliance on the symmetric properties of the weighted variant of the Laplacian matrices. 
Thus, we develop some new techniques to tackle the COR problem of heterogeneous linear MASs over directed graphs under DoS attacks in this study.
The primary contributions of this paper are outlined below.

(i) A novel distributed resilient fixed-time observer is developed by taking general directed graphs and DoS attacks into consideration. 
This observer is constructed for each agent to estimate the state of the exosystem, and it is proven that the observer's state approaches the exosystem's state in a fixed time, even when the communication network of the MAS is a general directed graph and subject to DoS attacks. An explicit upper bound, independent of the initial states of both the observer and the exosystem, is also provided.
With the proposed observer, we then develop a novel distributed resilient fixed-time controller via the geometrical homogeneity principle \cite{bhat2005geometric}.
This controller ensures the fixed-time convergence to zero of the regulated outputs with its upper bound independent of the initial states of the MAS.

(ii) Compared with \cite{yang2020observer}, \cite{xu2024fixed}, \cite{cao2025resilient2}, and \cite{cao2025resilient3}, which consider either undirected graphs or strongly connected and detail-balanced directed graphs, this work addresses more general directed graphs. In particular, unlike \cite{cao2025resilient3}, where the controller design depends on the detail-balanced condition, this work does not need this restriction. 
Furthermore, we adopt a single non-symmetric Lyapunov function candidate for stability analysis in this work, in contrast to multiple symmetric Lyapunov function candidates required in \cite{cao2025resilient3}.
The proposed distributed controller is expected to have broader applications as general directed graphs often exist in practical MASs.

\section{Preliminaries and Problem Statement}
\subsection{Algebraic Graph Theory}
A directed graph $\mathcal{G}\!=\!(\mathcal V,\mathcal{E})$ describes the communication network among $N$ agents and the exosystem labeled by $0$, in which 
$\mathcal V=\{0,1,\cdots, N\}$ 
signifies the set of nodes and $\mathcal E\!=\!\{(j,i)\!:i,j\!\in\! \mathcal V\}$ indicates the set of edges.
$(j,i) \in \mathcal{E}$ indicates that node $i$ is a neighbor of node $j$, implying that node $i$ has access to the information from node $j$.
A directed path from node $j_1$ to node $j_k$ refers to  a succession of edges, represented as $(j_r, {j_{r+1}}),$ for $r=1,2,\cdots,k-1$. 
A directed graph  is said to contain a directed spanning tree if there is a root node (with no parent) so that all other nodes are reachable from it along directed paths.
The adjacency matrix $\mathcal A=[a_{ij}]\in \mathbb R^{(N+1)\times (N+1)}$  of graph $\mathcal G$ is defined as follows: $a_{ii}=0$, and for $i\neq j$, $a_{ij}>0$ if $(j,i)\in \mathcal E$, and $a_{ij}=0$ otherwise.  
Let $\mathcal V_a=\{1,2,\cdots, N\}$.
Denote $\mathcal H=[h_{ij}]\in \mathbb R^{N\times N}$, where  
$h_{ii}=\sum_{j={1}}^Na_{ij}+a_{i0}$ and $h_{ij}=-a_{ij}$, for $i\neq j$ and $i,j\in \mathcal V_a$.

\subsection{DoS Attacks}
In a MAS, agents interact with their neighbors to exchange information via a communication network. 
DoS attacks refer to situations in which information transfer between agents is denied by blocking or disrupting network connectivity. In other words, when a communication network is under DoS attacks, those agents with the corresponding edges under attack will not be able to send/receive information. In this work, we focus on the most severe form of DoS attacks, called the zero-topology attacks. 
Let $t_0$ be the initial time. 
For the $k$th DoS attack, $k \in \mathbb{N}_+$, we denote its starting instant by $t_k^a$ and its ending instant by $t_k$.
Over the interval $[t_0, t]$, let $\Pi_D(t_0,t)$ be the union of all time intervals during which DoS attacks occur, and let $\Pi_N(t_0,t)$ be the union of all time intervals free from attacks, that is, 
$$
\Pi_D(t_0,t) 
= \bigcup_{\{l:\, t_l^a \leq t\}}
\bigl[ t_l^a,\, \min\{t_l,t\} \bigr),
$$
and 
$$
	\Pi_N(t_0,t)=[t_0,t]\backslash \Pi_D(t_0,t),
$$
where $[t_0,t]\backslash \Pi_D(t_0,t)$ represents the relative complement of $\Pi_D(t_0,t)$ in $[t_0,t]$. The lengths of sets $\Pi_D(t_0,t)$ and $\Pi_N(t_0,t)$  are denoted as follows,
\begin{align*}
	\begin{array}{lll}
		|\Pi_D(t_0,t)|=\left\{
		\begin{array}{l}
			\sum_{l=1}^{k-1}(t_l-t_l^a),t\in [t_{k-1}, t^a_k),\\
			\sum_{l=1}^{k-1}(t_l-t_l^a)+(t-t_k^a),t\in [t_k^a,t_{k}),
		\end{array}\right.
	\end{array}
\end{align*}
and
\begin{align*}\label{dun}
	|\Pi_N(t_0,t)|=t-t_0-|\Pi_D(t_0,t)|.
\end{align*}

Taking DoS attacks into account, the corresponding communication graphs become time-varying.
Define $\mathcal A^{\vartheta(t)}=[a_{ij}^{\vartheta(t)}]$ as the adjacency matrix under DoS attacks, where $a_{ij}^{\vartheta(t)}=a_{ij}\cdot \vartheta(t)$ and $\vartheta(t)$ is specified as 
\begin{align*}
	\begin{array}{lll}
\vartheta(t)=\left\{
		\begin{array}{l}
		1,t\in [t_{k-1}, t^a_k),\\
			0,t\in [t_k^a,t_{k}).
		\end{array}\right.
	\end{array}
\end{align*}


\subsection{Notations and Definitions}
The following notations are adopted in this work.
Let $\mathbb R$ ($\mathbb R_+$) and $\mathbb N$ ($\mathbb N_+$) denote the sets of real (positive real) and natural (positive natural) numbers, respectively.
$\mathbb{R}^n$ and $\mathbb{R}^{n \times n}$ represent the spaces of $n$-dimensional real vectors and $n \times n$ real matrices, respectively.
For a vector $x\in \mathbb R^n$, denote $\|x\|_1=\sum_{i=1}^{n}|x_i|$, $\|x\|=(\sum_{i=1}^{n}x_i^2)^{\frac{1}{2}}$, and $\text{sig}^c(x)=[\text{sign}(x_1)|x_1|^c, \text{sign}(x_2)|x_2|^c,\cdots, \text{sign}(x_n)|x_n|^c]^T$, where $\text{sign}(x_i)$ indicates the sign function of $x_i$ and $c\in \mathbb R_+$.
The notation $\otimes$ is the Kronecker operator. 
${I}_n\in \mathbb R^{n\times n}$ represents the identity matrix.
$\textbf{0}_{n}\in \mathbb R^n$ denotes the vector with all zero entries.
Given a matrix $A\in \mathbb R^{n\times n}$, let its minimum eigenvalue and maximum eigenvalue be represented as $\lambda_m(A)$ and $\lambda_M(A)$, respectively, and
$A > 0$ ($\geq 0$) indicates that $A$ is positive definite (positive semi-definite).

\begin{definition}\label{FTstable} \citep{bhat2000finite,polyakov2011nonlinear}
Consider the system
	\begin{equation}\label{finiteq}
		\dot{x}(t) = g(x,t),
	\end{equation}
	where $x(t) \in \mathbb{R}^n$ is its state, $g(x,t): \mathbb{R}^n\times \mathbb R_+ \to \mathbb{R}^n$ denotes a nonlinear function that may exhibit discontinuities,  and $g(0,t) = 0$. Solutions of \eqref{finiteq} are interpreted in the context of Filippov's framework \citep{filippov2013differential}.
	Let $x(0) = x_0$ and denote the solution as $x(t, x_0)$. The origin is said to be:
	\begin{enumerate}[(1)]
	\item \emph{Globally finite-time stable} if there is a function $\bar{T}(x_0): \mathbb{R}^n \setminus \{0\} \to \mathbb{R}_+$ such that:
		\begin{enumerate}[(a)]
			\item  For each $x_0 \in \mathbb{R}^n \setminus \{0\}$, 
			$x(t, x_0)$ is defined on $[0, \bar{T}(x_0))$, $x(t, x_0) \in \mathbb{R}^n \setminus \{0\}$, $\forall t \in [0, \bar{T}(x_0))$, and 
			$\lim_{t \to \bar{T}(x_0)} x(t, x_0) = 0$. (\emph{Global finite-time convergence})
			\item  For each open neighborhood $\mathcal{W}_x$ of the origin, there is an open neighborhood $\mathcal{W}_0 \subset \mathbb{R}^n$ of the origin such that for every $x_0 \in \mathcal{W}_0\setminus\{0\}$, $x(t, x_0) \in \mathcal{W}_x$ holds, $\forall t \in [0, \bar{T}(x_0))$. (\emph{Lyapunov stability})
		\end{enumerate}
		\item \emph{Globally fixed-time stable} if it is globally finite-time stable and $\bar{T}(x_0)$ is globally bounded by a constant $T \in \mathbb{R}_+$, independent of $x_0$.
	\end{enumerate}
\end{definition}

\subsection{Problem Statement}
Consider a MAS composed of an exosystem and $N$ agents.
 The dynamics of agent $i$ are given by
\begin{align}\label{agent}
	\begin{array}{lll}
		\dot x_i(t)=A_ix_i(t)+B_iu_i(t)+E_iv(t),\\
		e_i(t)=C_ix_i(t)+F_iv(t),
	\end{array}
\end{align}where $x_i(t)\in\mathbb R^{n_i}$ and $u_i(t)\in\mathbb R^{m_i}$ are its state and input, respectively, $e_i(t)\in\mathbb R^{p_i}$ is its regulated output, $n_i$ and $m_i$ are two positive integers with $m_i\leq n_i$, 
$A_i$, $B_i$, $C_i$, $E_i$ and $F_i$ are real matrices, 
$(A_i,B_i)$ is controllable with $B_i$ {being of full column rank}; $v(t)\in \mathbb R^q$ denotes the reference input to be
followed and/or the disturbance to be rejected, which is generated by the following exosystem,
\begin{align}\label{exo}
	\dot v(t)=Sv(t),
\end{align}
where $S\in \mathbb R^{q\times q}$.

This work aims to tackle the resilient fixed-time COR problem, formally defined below:

\textbf{Resilient Fixed-Time COR Problem:}
Given the MAS (\ref{agent})--(\ref{exo}) over directed graphs under DoS attacks, design a distributed controller such that
the resulting closed-loop system is fixed-time stable at the origin
when $v(t)\equiv0$; and 
the regulated output of each agent approaches zero in a fixed time, i.e., 
\begin{equation}\label{qe4}
	e_i(t)=0, \forall t\geq T,
\end{equation}
where $T\in \mathbb R_+$ is independent of the initial states.

To tackle this problem, we introduce some assumptions below.

\begin{assumption}\label{assdos}
The duration of  DoS attacks satisfies
	\begin{align*}\begin{array}{lll}
			|\Pi_D(t_0,t)| \leq \nu_d+\frac{t-t_0}{p_d}, \forall t \geq t_0,
		\end{array}
	\end{align*}
where $\nu_d>0$ and $p_d >1$.
\end{assumption}
\begin{assumption}\label{ass1}
For directed graph $\mathcal{G}$, there exists a spanning tree with node $0$ as its root.
\end{assumption}
\begin{assumption}\label{reglator}
	The regulator equations
	\begin{subequations}\label{req}
		\begin{align}
			A_i\Pi_i+B_i\Gamma_i+E_i&=\Pi_iS,\label{Za}\\
			C_i\Pi_i+F_i&=0, i\in \mathcal V_a,\label{Zb}
		\end{align}
	\end{subequations}have solutions $(\Pi_i,\Gamma_i)$.
\end{assumption}
\begin{remark}
	Assumption \ref{assdos} ensures that DoS attack duration remains bounded by a proportion of the overall time interval \citep{de2015input,deng2022resilient,zhang2022resilient}. 
	Assumptions \ref{ass1}--\ref{reglator} are made as in \cite{su2011cooperative,deng2022resilient,deng2020distributed,cao2025resilient3}, which are necessary and standard in solving the problem of COR for MASs. 
\end{remark}

Before proceeding, we present the following lemma, which describes essential properties of matrix $\mathcal H$ related to the directed graph $\mathcal G$.
\begin{lemma}\citep{mei2012distributed,dong2022110571}\label{lem1}
Let $\mathcal H$ be the matrix corresponding to the graph $\mathcal G$ under Assumption \ref{ass1}. The following statements are true:\\
	(i) Each eigenvalue of $\mathcal H$ has a positive real part.\\
	(ii) There is a matrix
	$K\triangleq\text{diag}\{k_1,k_2,\cdots,k_N\}>0$
	such that
	$\mathcal H^T K+K \mathcal H-2I_N\geq0$.
\end{lemma}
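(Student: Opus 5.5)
We must prove Lemma \ref{lem1}: (i) each eigenvalue of $\mathcal H$ has positive real part; (ii) there is a diagonal $K>0$ with $\mathcal H^TK+K\mathcal H-2I_N\ge 0$. The plan is to show first that $\mathcal H$ is a nonsingular M-matrix, and then take the standard diagonal Lyapunov weight of such matrices.

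\textbf{Step 1: $\mathcal H$ is a nonsingular M-matrix.} By construction $\mathcal H=L_a+\text{diag}\{a_{10},\dots,a_{N0}\}$, where $L_a$ is the Laplacian of the subgraph on $\mathcal V_a$. So $\mathcal H$ is a Z-matrix, and it is weakly row diagonally dominant: $h_{ii}-\sum_{j\ne i}|h_{ij}|=a_{i0}\ge 0$. Rows with $a_{i0}>0$ are strictly dominant.

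By Assumption \ref{ass1}, every agent $i\in\mathcal V_a$ is reachable from node $0$. Reversing such a path, each row $i$ connects through nonzero off-diagonal entries $h_{i j_1},h_{j_1j_2},\dots$ to some row $j_k$ with $a_{j_k0}>0$. Hence $\mathcal H$ is weakly chained diagonally dominant, and therefore nonsingular.

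Gershgorin's theorem places every eigenvalue in discs centered at $h_{ii}\ge0$ with radius at most $h_{ii}$. So every eigenvalue has $\mathrm{Re}\,\lambda\ge 0$, and $\mathrm{Re}\,\lambda=0$ forces $\lambda=0$, which nonsingularity excludes. This proves (i).

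An alternative route to (i) is to write $\mathcal H=sI-B$ with $B\ge 0$ entrywise and $s=\max_i h_{ii}$. Then (i) is equivalent to $\rho(B)<s$, which is exactly the nonsingular M-matrix property.

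\textbf{Step 2: construct $K$.} Since $\mathcal H$ is a nonsingular M-matrix, $\mathcal H^{-1}\ge 0$ entrywise and has no zero row. Set $q=\mathcal H^{-1}\mathbf 1_N>0$ and $p=\mathcal H^{-T}\mathbf 1_N>0$; these satisfy $\mathcal H q=\mathbf 1$ and $\mathcal H^Tp=\mathbf 1$. Take $K_0=\text{diag}\{p_i/q_i\}$ and $Q=K_0\mathcal H+\mathcal H^TK_0$.

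$Q$ is a symmetric Z-matrix, because $K_0$ is positive diagonal. Moreover
$$Qq=K_0\mathcal Hq+\mathcal H^TK_0q=K_0\mathbf 1+\mathcal H^Tp=K_0\mathbf 1+\mathbf 1>0.$$
So $Q$ is a Z-matrix with a positive vector $q$ satisfying $Qq>0$, hence a nonsingular M-matrix. Being symmetric, it is positive definite.

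Finally scale: $K=\frac{2}{\lambda_m(Q)}K_0$ gives $\mathcal H^TK+K\mathcal H=\frac{2}{\lambda_m(Q)}Q\ge 2I_N$, which is (ii).

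\textbf{Main obstacle.} The key step is justifying the M-matrix facts. The delicate ones are that $\mathcal H^{-1}\ge0$ with strictly positive $q$ and $p$, and that a symmetric Z-matrix with a positive vector $q$ satisfying $Qq>0$ is positive definite.

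For the first, I would use the Neumann series $\mathcal H^{-1}=s^{-1}\sum_k (B/s)^k$, which converges since $\rho(B)<s$; nonsingularity rules out a zero row. For the second, I would argue via diagonal similarity: $D_q^{-1}QD_q$ is strictly row diagonally dominant with positive diagonal, so its eigenvalues, which coincide with those of $Q$, lie in the open right half plane. Since $Q$ is symmetric, these eigenvalues are real and therefore positive.
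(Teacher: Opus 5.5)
Your proof is correct. The paper gives no argument for this lemma and simply imports it from the cited works, so what you have written is a self-contained reconstruction of the standard M-matrix argument behind it:
\begin{itemize}
\item nonsingularity of $\mathcal H$ via weak chained diagonal dominance (from the spanning tree), then Gershgorin, for (i);
\item inverse-positivity of $\mathcal H$ via the Neumann series;
\item the diagonal weight $K_0=\text{diag}\{p_i/q_i\}$, with $p=\mathcal H^{-T}\mathbf 1_N$ and $q=\mathcal H^{-1}\mathbf 1_N$, for (ii).
\end{itemize}
What your version buys over the bare citation is an explicit recipe for $K$, and hence for $k_M$, which enters the constants $c_2$--$c_4$ of Theorem~\ref{lemmat}. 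Your construction is also the fully general one, valid for any nonsingular M-matrix.

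For this particular $\mathcal H$ it can be shortened. Since $\mathcal H\mathbf 1_N=[a_{10},\dots,a_{N0}]^T\ge 0$, you may take $q=\mathbf 1_N$, that is, $K_0=\text{diag}\{\mathcal H^{-T}\mathbf 1_N\}$. Then $Q\mathbf 1_N=K_0\mathcal H\mathbf 1_N+\mathbf 1_N\ge \mathbf 1_N$, so the Z-matrix $Q$ is already strictly row diagonally dominant with positive diagonal. Gershgorin plus symmetry then gives $Q>0$ directly, without positivity of $\mathcal H^{-1}\mathbf 1_N$ and without the diagonal-similarity step.

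One cosmetic point: positivity of $p$ needs $\mathcal H^{-1}$ to have no zero \emph{column}, not no zero row. This of course also holds by invertibility.
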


\section{Main Results}
This section first presents a novel distributed resilient fixed-time observer that accounts for directed graphs and DoS attacks. 
Subsequently, a distributed resilient fixed-time controller is developed, and the resulting closed-loop system is analyzed.
\subsection{Distributed Resilient Fixed-Time Observer}
Taking directed graphs and DoS attacks into account, the distributed resilient fixed-time observer with state $\eta_i(t)$ is  constructed as follows,
\begin{align}	\label{ob}
\dot {\eta}_i(t)=&S\eta_i(t)-\mu_1\hat \varsigma_i(t)-\mu_2\text{sig}^\alpha( \hat\varsigma_i(t) )\nonumber\\
&-\mu_3\text{sig}^\beta( \hat\varsigma_i(t) ), i\in \mathcal V_a,
\end{align}where 
$\hat \varsigma_i(t)=\sum_{j=0}^N a_{ij}^{\vartheta(t)}(\eta_i(t)-\eta_j(t))$ with $\eta_0(t)=v(t)$, $\alpha$, $\beta$, $\mu_1$, $\mu_2$ and $\mu_3$ are positive constants to be specified.


Define the estimation error as $\tilde \eta_i(t)=\eta_i(t)-v(t)$.
Let $\varsigma_i(t)=\sum_{j=0}^Na_{ij}(\eta_i(t)-\eta_j(t))$  and $\Phi_i(t) = \mu_1 \varsigma_i(t) + \mu_2 \text{sig}^\alpha(\varsigma_i(t)) + \mu_3 \text{sig}^\beta(\varsigma_i(t))$. The corresponding stacked vectors are denoted as $\tilde{\eta}(t) = [\tilde{\eta}_1^T(t), \tilde{\eta}_2^T(t), \cdots, \tilde{\eta}_N^T(t)]^T$, $\varsigma(t) = [\varsigma_1^T(t), \varsigma_2^T(t), \cdots, \varsigma_N^T(t)]^T$, and $\Phi(t) = [\Phi_1^T(t), \Phi_2^T(t), $ $\cdots, \Phi_N^T(t)]^T$.
Note that $\varsigma(t)=(\mathcal H\otimes I_q)\tilde \eta(t)$.
Since the most severe form of DoS attacks is considered in this work, it then follows from (\ref{ob}) that
\begin{subnumcases}
	{\label{obq1}	\dot \varsigma(t)=}
	(I_N\otimes S)\varsigma(t)-(\mathcal H\otimes I_q)\Phi(t),
	t\in \Pi_N(t_0,t),\label{obqq11}\\
	(I_N\otimes S)\varsigma(t),t\in \Pi_D(t_0,t).\label{obqq12}
\end{subnumcases}

Next, the following technical result will be stated.
\begin{theorem}\label{lemmat}
	Consider the exosystem (\ref{exo}) and the distributed observer (\ref{ob}) over directed graphs under DoS attacks. Suppose that Assumptions \ref{assdos}--\ref{ass1} hold. Given the DoS attack parameters $\nu_d$ and $p_d$, if there are positive constants $\alpha$, $\beta$, $\mu_1$, $\mu_2$ and $\mu_3$ such that $0<\alpha<1<\frac{1}{\alpha}<\beta$, $\mu_1>\|(K\otimes S)\|$, and the following conditions hold:\\
    (i) $\tfrac{c_1}{5c_2}(p_d-1)-c_5>0$,\\
	(ii) ${\hat c_2}e^{\hat c_2\nu_d}-{\hat c_1(p_d-1)}<0$,\\
	(iii) $e^{\hat c_2(\tfrac{t_o-\bar t_o}{p_d}+\nu_d)}-\tfrac{\hat c_1(p_d-1)}{p_d}(t_o-\bar t_o)+\hat c_1\nu_d\leq 0$,\\ 
	then the estimation error approaches zero in a fixed time $t_o$, i.e.,
	$$\tilde \eta_i(t)=0, \forall t\geq t_o,$$
	where $K$ is defined in Lemma \ref{lem1}, $\bar t_o\in \mathbb R_+$ and $t_o\in \mathbb R_+$ are calculated by 
	\begin{align}\label{bto}
c_2e^{(\tilde c_1(p_d-1)-\tilde c_2)\frac{(t-t_0)}{p_d}-\tilde c_1\nu_d-\tilde c_2\nu_d}-c_2-c_3=0,
\end{align}
	and 
	\begin{align}\label{to}
\hat c_2e^{\hat c_2(\frac{t-\bar t_o}{p_d}+\nu_d)}-\hat c_1(p_d-1)=0,
	\end{align}respectively, 
$\hat c_1=\tfrac{c_1(1-\alpha)}{5c_4(\alpha+1)}$,
$\hat c_2=\tfrac{c_5(1-\alpha)}{\alpha+1}$, 
$\tilde c_1=\tfrac{c_1(\beta-1)}{5 c_2(\beta+1)}$,
$\tilde c_2=\tfrac{c_5(\beta-1)}{\beta+1}$,
$c_1=\tfrac{1}{2}\min\{\mu_1^2-\|(K\otimes S)\|^2,\mu_2^2,\mu_3^2(Nq)^{1-\beta}\}$, \\
$c_2=\max\{\tfrac{\mu_1k_M}{2},\tfrac{\mu_2k_M}{\alpha+1}(Nq)^{\frac{1-\alpha}{2}},\tfrac{\mu_3k_M}{\beta+1}\}$,\\
$c_3=(3Nq)^{\frac{\beta-1}{\beta+1}}(\max\{\tfrac{\mu_1k_M}{2},\tfrac{\mu_2k_M}{\alpha+1},\tfrac{\mu_3k_M}{\beta+1}\})^{\frac{2\beta}{\beta+1}}$,\\
$c_4=\max\{\!(\!\tfrac{\mu_1k_M}{2}\!)\!^{\frac{2\alpha}{\alpha+1}}\!(Nq)\!^{\frac{1-\alpha}{\alpha+1}}, (\!\tfrac{\mu_2k_M}{\alpha+1}\!)\!^{\frac{2\alpha}{\alpha+1}}\!(Nq)^{1-\alpha}, (\!\tfrac{\mu_3k_M}{\beta+1}\!)\!^{\frac{2\alpha}{\alpha+1}}\}$,
$c_5=\max\{q^{\frac{1-\alpha}{2}}, q^{\frac{\beta-1}{2}}\}\|S\|(\beta+1)$, 
$k_M=\max_{i=1}^N\{k_i\}$.

Moreover, the upper bound of the settling time is
	given by 
	\begin{align}
		t_o=&t_0+\frac{p_d(\ln(1+\frac{c_3}{c_2})+(\tilde c_1+\tilde c_2)\nu_d)}{\tilde c_1(p_d-1)-\tilde c_2}\nonumber\\
		&+\frac{p_d}{\hat c_2}(\ln(\frac{\hat c_1(p_d-1)}{\hat c_2})-\hat c_2\nu_d).
	\end{align}
\end{theorem}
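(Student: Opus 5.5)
I would work with the error dynamics \eqref{obq1} and use a single non-symmetric Lyapunov function built from the diagonal matrix $K$ of Lemma \ref{lem1}:
\[
V(t)=\sum_{i=1}^N k_i\Bigl(\tfrac{\mu_1}{2}\|\varsigma_i\|^2+\tfrac{\mu_2}{\alpha+1}\|\varsigma_i\|_{\alpha+1}^{\alpha+1}+\tfrac{\mu_3}{\beta+1}\|\varsigma_i\|_{\beta+1}^{\beta+1}\Bigr).
\]
Its gradient is $(K\otimes I_q)\Phi$, so along \eqref{obqq11} we get
\[
\dot V=\Phi^T(K\otimes S)\varsigma-\Phi^T(K\mathcal H\otimes I_q)\Phi .
\]
By Lemma \ref{lem1}(ii), the second term is at most $-\|\Phi\|^2$. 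I would expand $\|\Phi\|^2$ into its three pieces $\mu_1^2\|\varsigma\|^2$, $\mu_2^2\|\text{sig}^\alpha\|^2$ and $\mu_3^2\|\text{sig}^\beta\|^2$, plus nonnegative cross terms, since $\varsigma$, $\text{sig}^\alpha(\varsigma)$ and $\text{sig}^\beta(\varsigma)$ are componentwise sign-aligned. The indefinite term $\Phi^T(K\otimes S)\varsigma$ I would split in two:
\begin{itemize}
\item the $\mu_1$ part is absorbed by Young's inequality using $\mu_1>\|K\otimes S\|$;
\item the $\mu_2$ and $\mu_3$ parts are bounded by $\|S\|$ times quantities controlled by $V$ itself.
\end{itemize}
The goal is the pair of differential inequalities
\[
\dot V\le -\tfrac{c_1}{5c_2}\cdots\ \text{in attack-free intervals},\qquad \dot V\le c_5(\cdot)V\ \text{during attacks}.
\]

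Concretely, I would use norm equivalences to obtain two regimes. In the first, Jensen/power-mean bounds with the $(Nq)$ factors in $c_1,c_2,c_3$ give
\[
\|\Phi\|^2\gtrsim \tfrac{c_1}{c_2}\,V\cdot V^{\frac{\beta-1}{\beta+1}}
\]
for large $V$. In the second, using $c_4$,
\[
\|\Phi\|^2\gtrsim \tfrac{c_1}{c_4}\,V^{\frac{2\alpha}{\alpha+1}}
\]
for small $V$. The growth term is handled the same way, giving during attacks (from \eqref{obqq12})
\[
\dot V\le c_5\,V\cdot(\text{same power}).
\]
The factor $5$ comes from splitting the negative part among the terms being dominated. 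Substituting $W=V^{\frac{1-\alpha}{1+\alpha}}$ in the small regime and $W=V^{-\frac{\beta-1}{\beta+1}}$ in the large regime turns each inequality into a linear switched one:
\[
\dot W\le-\hat c_1 + \hat c_2 W\ \text{(resp. } \tilde c_1,\tilde c_2\text{)}\ \text{on }\Pi_N,\qquad \dot W\le \hat c_2 W\ \text{(resp. }\tilde c_2\text{)}\ \text{on }\Pi_D .
\]

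The next step is to integrate piecewise over attack and non-attack intervals and bound the total attack time with Assumption \ref{assdos}, $|\Pi_D|\le\nu_d+(t-t_0)/p_d$, so that $|\Pi_N|\ge (t-t_0)(p_d-1)/p_d-\nu_d$.

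Phase 1 (large $V$): the variable $V^{-\frac{\beta-1}{\beta+1}}$ grows from essentially $0$, uniformly in the initial state. I would show it exceeds the threshold $(c_2+c_3)^{-1}$-type level, i.e. $V$ enters a bounded set, before $\bar t_o$ defined by \eqref{bto}. Condition (i) ensures the net rate $\tilde c_1(p_d-1)-\tilde c_2>0$, which makes this time finite and gives the first term of $t_o$.

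Phase 2 (from $\bar t_o$): with $W$ bounded by a uniform constant, a comparison argument gives
\[
W(t)\le e^{\hat c_2|\Pi_D|+\dots}\bigl(W(\bar t_o)-\hat c_1|\Pi_N|\bigr).
\]
Conditions (ii)–(iii) and \eqref{to} guarantee the right side reaches zero by $t_o$. Then $\varsigma=0$, and since $\mathcal H$ is nonsingular by Lemma \ref{lem1}(i), $\tilde\eta=0$. Once $\varsigma=0$ it stays zero, because both modes of \eqref{obq1} keep the origin invariant.

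The main obstacle is the switched comparison argument. During attacks $V$ grows exponentially, and with fractional powers the inequalities are not linear in $V$. Each mode therefore has to be converted to the same transformed variable, and attack growth has to be tracked multiplicatively against decrease only on $\Pi_N$. I would first try the standard trick from resilient consensus: define $W$, bound $W(t)\le e^{\hat c_2 |\Pi_D|}W(\bar t_o)-\hat c_1|\Pi_N|\cdot(\text{const})$ by induction over attack intervals, then insert the duration bound. Getting constants that exactly match (ii)–(iii) will likely need some slack, for example the factor $5$ and the $\max/\min$ definitions of $c_i$.
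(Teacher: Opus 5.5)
Your overall architecture matches the paper's. You use the same $V$ and the same two phases: first $V^{\frac{1-\beta}{\beta+1}}$ is pushed above $1$ by $\bar t_o$ uniformly in the initial state, then $V^{\frac{1-\alpha}{\alpha+1}}$ is pushed to $0$, with piecewise integration and the duration bound throughout. The gap is in the attack-free inequality. You absorb only the $\mu_1$ part of $\Phi^T(K\otimes S)\varsigma$ and bound the $\mu_2,\mu_3$ parts by $\|S\|$ times $V$-like quantities. That leaves a positive term in the attack-free mode, which is why you reach $\dot W\le-\hat c_1+\hat c_2W$ on $\Pi_N$ and an unspecified ``$+\dots$'' in Phase~2.

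The hypotheses cannot absorb such a term. Condition (i) is exactly $\tilde c_1(p_d-1)>\tilde c_2$. Likewise, $g_2$ in \eqref{g2}, conditions (ii)--(iii) and \eqref{to} weigh progress made on $\Pi_N$ only against growth incurred on $\Pi_D$ only. With your inequality, attack-free intervals do not even guarantee a decrease of $W$ near $W=1$ when $\hat c_2\ge\hat c_1$, which (ii) permits: the paper's own example has $\hat c_1\approx0.074<\hat c_2\approx0.101$. So this is not a matter of slack in the constants; (i)--(iii) do not imply the conclusion from your inequalities. You could instead absorb the $\mu_2,\mu_3$ parts into the nonnegative cross terms $2\mu_1\mu_2\sum|\varsigma_{ij}|^{1+\alpha}$ and $2\mu_1\mu_3\sum|\varsigma_{ij}|^{1+\beta}$ of $\|\Phi\|^2$. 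That works only under an extra condition like $2\mu_1\ge\max\{q^{\frac{1-\alpha}{2}},q^{\frac{\beta-1}{2}}\}\|K\otimes S\|$ and gives a different $c_1$, so it proves a variant, not the stated theorem.

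The missing step is to apply Young's inequality to the whole cross term: $\Phi^T(K\otimes S)\varsigma\le\frac12\|\Phi\|^2+\frac12\|K\otimes S\|^2\|\varsigma\|^2$. The graph term, which is at most $-\|\Phi\|^2$ by Lemma~\ref{lem1}(ii), then absorbs $\frac12\|\Phi\|^2$. What remains is $-\frac12\|\Phi\|^2\le-\frac12\bigl(\mu_1^2\|\varsigma\|^2+\mu_2^2(\|\varsigma\|^2)^\alpha+\mu_3^2(Nq)^{1-\beta}(\|\varsigma\|^2)^\beta\bigr)$, which dominates $\frac12\|K\otimes S\|^2\|\varsigma\|^2$; this is where $\mu_1^2-\|K\otimes S\|^2$ in $c_1$ comes from. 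Comparing $V$, $V^{\frac{2\beta}{\beta+1}}$ and $V^{\frac{2\alpha}{\alpha+1}}$ with the three powers of $\|\varsigma\|^2$ then gives the purely dissipative bound \eqref{q18}.

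The $\|S\|$-estimates belong to the attack mode only, and there they must give the \emph{linear} bound $\dot V\le c_5V$ of \eqref{q23}, not a bound in ``the same power''. With these two bounds, on $\Pi_N$ you get $\frac{d}{dt}V^{\frac{1-\beta}{\beta+1}}\ge\tilde c_1\bigl(V^{\frac{1-\beta}{\beta+1}}+\frac{c_2}{c_3}\bigr)$ and $\frac{d}{dt}V^{\frac{1-\alpha}{\alpha+1}}\le-\hat c_1$. In the first, the linear term supplies the rate $\tilde c_1$ and the $V^{\frac{2\beta}{\beta+1}}$ term supplies the offset $\frac{c_2}{c_3}$ that makes the bound uniform. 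On $\Pi_D$ you get the multiplicative bounds with rates $-\tilde c_2$ and $\hat c_2$.

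Iterating over intervals yields $V^{\frac{1-\beta}{\beta+1}}(t)\ge\frac{c_2}{c_3}\bigl(e^{\tilde c_1|\Pi_N|-\tilde c_2|\Pi_D|}-1\bigr)$ and $V^{\frac{1-\alpha}{\alpha+1}}(t)\le e^{\hat c_2|\Pi_D|}-\hat c_1|\Pi_N|$. The latter is additive. Your factored form $e^{\hat c_2|\Pi_D|}\bigl(W(\bar t_o)-\hat c_1|\Pi_N|\bigr)$ does not follow from the differential inequalities, because it applies an attack's amplification to decrease earned after that attack. Inserting the duration bound into these two estimates produces exactly (i), \eqref{bto}, $g_2$ and (ii)--(iii).
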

\textbf{\textit{Proof:}}
Refer to Appendix B.
\begin{remark}
The existence of a solution to (\ref{bto}) is ensured by condition (i), while that of (\ref{to}) is ensured by condition (ii). Condition (iii) guarantees that the minimum of $g_2(t)$, defined in (\ref{g2}), is non-positive.
	From conditions (i)--(iii), the observer gains $\mu_1$, $\mu_2$ and $\mu_3$ should be selected such that
\begin{align}
	\frac{c_1}{c_4}
		&\geq \frac{5c_5}{p_d-1}\max\{e^{\hat c_2\nu_d},e^{(1+\frac{\hat c_2p_d\nu_d}{p_d-1})}\},\\
		\frac{c_1}{c_2}&\geq \frac{5c_5}{p_d-1}.
	\end{align}
\end{remark}
\begin{remark}
Conditions (i)--(iii) in Theorem \ref{lemmat} indicate that increasing the observer gains $\mu_1$, $\mu_2$ and $\mu_3$ can resist stronger DoS attacks, which are characterized by a decrease of $p_d$ and/or an increase of $\nu_d$.
\end{remark}
\subsection{Distributed Resilient Fixed-Time Controller}
With the distributed observer (\ref{ob}) and the geometric homogeneity principle \citep{bhat2005geometric}, this subsection presents a distributed resilient fixed-time controller.

Let $\tilde x_i(t)=x_i(t)-\Pi_i\eta_i(t)$, where $\Pi_i$ is given by (\ref{req}). 
Given that $(A_i,B_i)$ is controllable and $B_i$ is of full column rank, it follows from Lemma \ref{vector} (i) in Appendix \ref{lemmas} that there is an output $\varrho_i(t)=R_i\tilde x_i(t)=[\varrho_{i1}(t), \varrho_{i2}(t), \cdots, \varrho_{im_i}(t)]^T\in \mathbb R^{m_i}$ which has complete vector relative degree $(q_{i1}, q_{i2}, \cdots, q_{i{m_i}})$.
Then, we propose the distributed resilient fixed-time control law for agent $i$ as follows,
	\begin{align}\label{controller1n}
		u_i(t)=&X_i^{-1}(\omega_i(t)-U_i\tilde x_i(t))+\Gamma_i\eta_i(t), i\in \mathcal V_a,
	\end{align}where $\eta_i(t)\in \mathbb R^q$ is the state of the observer (\ref{ob}), 
$X_i=
\begin{bmatrix}
	\vspace{-0.5em}R_{i1}A_i^{q_{i1}-1}B_i\\
	\vdots\\
	R_{im_i}A_i^{q_{i{m_i}}-1}B_i
\end{bmatrix}
\in \mathbb R^{m_i\times m_i}$, 
$U_i=
\begin{bmatrix}
	\vspace{-0.5em}R_{i1}A_i^{q_{i1}}\\
	\vdots\\
	R_{im_i}A_i^{q_{i{m_i}}}
\end{bmatrix}
\in \mathbb R^{m_i\times n_i}$,
$R_{ir}$ is the $r$-th row of $R_i$, 
$\Gamma_i$ is defined in (\ref{req}), 
$\omega_i(t)=[\omega_{i1}(t),\omega_{i2}(t),\cdots,\omega_{im_i}(t)]^T$, $\omega_{ir}(t)=-\sum_{k=1}^{q_{ir}}(\psi_{{ir}}^k\text{sig}^{\gamma^k_{ir}}(\varrho_{ir}^{(k-1)}(t))+\bar \psi_{{ir}}^k\text{sig}^{\bar{\gamma}^k_{ir}}( \varrho_{ir}^{(k-1)}(t))),$
$r=1,\cdots,m_i$,  $\gamma_{ir}^k$, 
$\bar{\gamma}_{ir}^k$, $\psi_{ir}^k$ and $\bar \psi_{ir}^k$ are selected by Lemma \ref{lemmafinite} in Appendix \ref{lemmas}.


We now proceed to give the main result of our work.
\begin{theorem}\label{main theorem}
	Consider the heterogeneous linear MAS (\ref{agent})--(\ref{exo}) under Assumptions \ref{assdos}-\ref{reglator}.
	If conditions (i)--(iii) in Theorem \ref{lemmat} are satisfied, the resilient fixed-time COR problem is solved under the distributed resilient fixed-time controller  (\ref{controller1n}) together with (\ref{ob}).
Furthermore, the upper bound of the settling time for the resulting closed-loop system is given by $t_{a}\triangleq t_o+t_c$, $t_o$ is given by Theorem \ref{lemmat},  $t_{c}=\max_{ir}\{t_{{c}_{ir}}\}$, with
	\begin{eqnarray}\label{t_cij}
	t_{{c}_{ir}}\!=\!\frac{\gamma_{ir}\lambda_{M}(P_{ir})\lambda_{M}^{\frac{1-\gamma_{ir}}{\gamma_{ir}}}\!\!(P_{ir})}{(1-\gamma_{ir})\lambda_{m}(Q_{ir})}+\frac{\bar{\gamma}_{ir}\lambda_{M}(\bar P_{ir})\lambda_{M}^{\frac{\bar{\gamma}_{ir}-1}{\bar{\gamma}_{ir}}}\!\!(\bar P_{ir})}{(\bar{\gamma}_{ir}-1)\lambda_{m}(\bar Q_{ir})},
\end{eqnarray}
{where} $P_{ir}>0$ and $\bar P_{ir}>0$ satisfy the Lyapunov equations,
\begin{align}
	P_{ir}\Psi_{ir}+{\Psi_{ir}^T}P_{ir} &= -Q_{ir}, \label{LY} \\
	\bar P_{ir}\bar \Psi_{ir}+{\bar \Psi_{ir}}^T\bar P_{ir} &= -\bar Q_{ir}, \label{LY2}
\end{align}
	for any $Q_{ir}>0$ and $\bar Q_{ir}>0$, respectively, $\Psi_{ir}$ and $\bar \Psi_{ir}$ are given as follows,
	\begin{align}\label{Aij1}
		\Psi_{ir}=\left[\begin{matrix}
				\renewcommand{\arraystretch}{0.7}
			0&1&\cdots&0\\
			\vdots&\vdots&\ddots&\vdots\\
			0&0&\cdots&1\\
			-\psi_{ir}^1&-\psi_{ir}^2&\cdots&-\psi_{ir}^{q_{ir}}
		\end{matrix}\right],
	\end{align}
	\begin{align}\label{Aij2}
		\bar \Psi_{ir}=\left[\begin{matrix}
				\renewcommand{\arraystretch}{0.7}
			0&1&\cdots&0\\
			\vdots&\vdots&\ddots&\vdots\\
			0&0&\cdots&1\\
			-\bar \psi_{ir}^1&-\bar \psi_{ir}^2&\cdots&-\bar \psi_{ir}^{q_{ir}}
		\end{matrix}\right].\end{align}
\end{theorem}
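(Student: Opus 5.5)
The proof will be a cascade argument: Theorem \ref{lemmat} handles the observer, and the regulation part follows from the error dynamics, which we split in time at $t_o$.

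\emph{Step 1 (error dynamics).} Differentiate $\tilde x_i=x_i-\Pi_i\eta_i$ using (\ref{agent}), (\ref{ob}) and (\ref{Za}):
\begin{align*}
\dot{\tilde x}_i=A_i\tilde x_i+B_i(u_i-\Gamma_i\eta_i)-\Pi_i\big(\dot\eta_i-S\eta_i\big)-E_i\tilde\eta_i ,
\end{align*}
where $\dot\eta_i-S\eta_i=-\mu_1\hat\varsigma_i-\mu_2\text{sig}^\alpha(\hat\varsigma_i)-\mu_3\text{sig}^\beta(\hat\varsigma_i)$. Under (\ref{Zb}), $e_i=C_i\tilde x_i+C_i\Pi_i\tilde\eta_i$.

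By Theorem \ref{lemmat}, $\tilde\eta_i(t)=0$ for $t\ge t_o$, and then $\hat\varsigma_i\equiv 0$ regardless of $\vartheta(t)$. Hence for $t\ge t_o$ the dynamics reduce to $\dot{\tilde x}_i=A_i\tilde x_i+B_i(u_i-\Gamma_i\eta_i)$, and $e_i=C_i\tilde x_i$.

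\emph{Step 2 (no escape before $t_o$).} On $[t_0,t_o]$ the $\tilde x_i$-dynamics are linear in $\tilde x_i$, with a homogeneous feedback whose growth is at most polynomial of degree $\max\bar\gamma^k_{ir}$. They are forced by $\tilde\eta_i$ and $\hat\varsigma_i$, which stay bounded on this compact interval. I would argue that $\tilde x_i$ does not blow up before $t_o$ by taking the converse-Lyapunov functions of the nominal system (Step 3) and using a comparison argument on a growth bound.

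This is the delicate point. Super-linear terms $\text{sig}^{\bar\gamma}$ with $\bar\gamma>1$ can cause finite escape under persistent disturbance. I would handle it by noting that the homogeneous-of-degree-$>0$ part is stabilizing in the Lyapunov sense, so $\dot{\bar V}\le -c\bar V^{\kappa}+d(t)\bar V^{1/2}$. This rules out escape, since the negative super-linear term dominates for large $\bar V$.

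Consequently $\tilde x_i(t_o)$ is finite. The fixed-time bound $t_c$ must be independent of this value anyway, so only finiteness is needed.

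\emph{Step 3 (fixed-time stabilization after $t_o$).} Substitute (\ref{controller1n}). By Lemma \ref{vector}(i), the output $\varrho_i$ has complete vector relative degree, so $\varrho_{ir}^{(q_{ir})}=R_{ir}A_i^{q_{ir}}\tilde x_i+R_{ir}A_i^{q_{ir}-1}B_i(u_i-\Gamma_i\eta_i)=\omega_{ir}$. Each channel is therefore a chain of $q_{ir}$ integrators closed by the bi-homogeneous feedback $\omega_{ir}$.

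Invoke Lemma \ref{lemmafinite}: with the gains chosen there, this chain is globally fixed-time stable. Near the origin, the degree-$<0$ part with Hurwitz $\Psi_{ir}$ and Lyapunov matrix $P_{ir}$ from (\ref{LY}) gives finite-time convergence. Far away, the degree-$>0$ part with $\bar\Psi_{ir}$ and $\bar P_{ir}$ from (\ref{LY2}) brings the state into a unit ball in a uniformly bounded time. Summing the two phases yields $t_{c_{ir}}$ in (\ref{t_cij}).

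Thus $\varrho_{ir}^{(k)}(t)=0$ for $t\ge t_o+t_{c_{ir}}$ and all $k<q_{ir}$. The relative-degree coordinates have total dimension $\sum_r q_{ir}=n_i$ (the Brunovsky property in Lemma \ref{vector}), so $\tilde x_i=0$. Hence $e_i=C_i\tilde x_i=0$ for all $t\ge t_a=t_o+t_c$, which establishes (\ref{qe4}).

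\emph{Step 4 (stability when $v\equiv 0$).} For $v\equiv 0$, the closed loop in the coordinates $(\tilde\eta,\tilde x)$ is a cascade: the observer error is fixed-time stable by Theorem \ref{lemmat}, and the $\tilde x$-subsystem has a fixed-time stable nominal part. Lyapunov stability of the cascade follows because small initial conditions keep $\tilde\eta$ small on $[t_0,t_o]$, and hence keep $\tilde x$ small by the comparison estimate of Step 2. Finally, $x_i=\tilde x_i+\Pi_i\eta_i$ and $\eta_i=\tilde\eta_i$ transfer this stability back to the original states, giving global fixed-time stability with settling bound $t_a$.
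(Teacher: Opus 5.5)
Your proposal is correct and follows the paper's proof. Theorem~\ref{lemmat} gives $\tilde\eta_i\equiv 0$ (hence $\hat\varsigma_i\equiv 0$) after $t_o$, the error dynamics collapse to $\dot{\tilde x}_i=A_i\tilde x_i+B_i(u_i-\Gamma_i\eta_i)$, and the relative-degree coordinates of Lemma~\ref{vector} with Lemma~\ref{lemmafinite} give $\tilde x_i=0$, hence $e_i=0$, for $t\ge t_o+t_c$. The paper reaches the integrator chains through the normal form of Lemma~\ref{vector}(ii) rather than your direct relative-degree identity, which is immaterial. The paper's proof consists only of your Steps~1 and~3: it tacitly assumes $\tilde x_i(t_o)$ is finite and never checks Lyapunov stability of the closed loop for $v\equiv 0$. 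Your Steps~2 and~4 therefore fill real omissions, but only as sketches. The bound $\dot{\bar V}\le -c\bar V^{\kappa}+d(t)\bar V^{1/2}$ needs a suitably chosen (bi-limit homogeneous) Lyapunov function whose gradient is dominated by its decay rate for large $\bar V$. This matters because before $t_o$ the disturbance $-\Pi_i(\dot\eta_i-S\eta_i)-E_i\tilde\eta_i$ enters all coordinates, not just the last integrator.
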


\textbf{\textit{Proof:}}
Recall that $\tilde \eta_i(t)=\eta_i(t)-v(t)$ and $\tilde x_i(t)=x_i(t)-\Pi_i\eta_i(t)$. From (\ref{Zb}), one has
\begin{align}\label{ei}
	e_i(t)=C_i\tilde x_i(t)+C_i\Pi\tilde \eta_i(t).
\end{align}According to Theorem \ref{lemmat}, $\tilde \eta_i(t)=0$ holds, for all $t\geq t_o$ and any initial state $\tilde \eta_i(0)\in \mathbb R^q$. 

Then, we will show the convergence of $\tilde x_i(t)$ in a fixed time. Recall that $\Phi_i(t) = \mu_1 \varsigma_i(t) + \mu_2 \text{sig}^\alpha(\varsigma_i(t)) + \mu_3 \text{sig}^\beta(\varsigma_i(t))$.
From Theorem \ref{lemmat}, we can directly conclude that $\Phi_i(t)=0$, for all $t\geq t_o$. 
It then follows from 
(\ref{agent}), (\ref{Za}) and (\ref{controller1n}) that 
\begin{align}\label{refq2}
	\dot{\tilde x}_i(t) &=A_i\tilde x_i(t)+B_i\tilde u_i(t), \forall t\geq t_o,
\end{align}where $\tilde u_i(t)=X_i^{-1}(\omega_i(t)-U_i\tilde x_i(t)).$
Recall that $(A_i,B_i)$ is controllable and $B_i$ is of full column rank.
According to Lemma \ref{vector} (ii) in Appendix \ref{lemmas}, system (\ref{refq2}) can be decomposed via the nonsingular transformations  $\bar u_i(t)=G_i\tilde u_i(t)$ and $\bar x_i(t)=T_i\tilde x_i(t)$ as follows,
\begin{align}\label{barx}
	\dot{\bar x}_i(t)=\bar A_i \bar x_i(t)+\bar B_i\bar u_i(t),
\end{align}where 
$\bar A_i=[\bar A_{i_{lr}}],
\bar B_i=\text{blockdiag}\{\Theta_{q_{i1}}, \Theta_{q_{i2}}, \cdots,\Theta_{q_{i{m_i}}}\}$,
$\bar A_{i_{ll}}=\left[\begin{matrix}
	\textbf 0_{q_{il}-1}& I_{q_{il}-1}\\
	\zeta_{i_{ll}}^1&\bar \zeta_{i_{ll}}\\
\end{matrix}\right]$,
$\bar A_{i_{lr}}=\left[\begin{matrix}
	\textbf 0_{(q_{ir}-1)\times q_{ir}}\\
	\zeta_{i_{lr}}\\
\end{matrix}\right]$,
$\bar \zeta_{i_{ll}}=[\zeta_{i_{ll}}^2, $ $\cdots, \zeta_{i_{ll}}^{q_{il}}]$,
$\zeta_{i_{lr}}=[\zeta_{i_{lr}}^1, \zeta_{i_{lr}}^2, \!\cdots,  \zeta_{i_{lr}}^{q_{ir}}]$,
$\Theta_{q_{il}}=[0, \cdots, $ $ 0, 1]^T\!\in\! \mathbb R^{q_{il}}$, $\sum_{j=1}^{{m_i}}q_{ir}={n_i}$, $l\!\neq\! r$ and $l,r=1,$ $2, \cdots,{m_i}$. 

Then, by denoting the first component of the $r$-th block of $\bar x_i(t)$ as $\varrho_{ir}(t)$, the output $\varrho_i(t)=R_i\tilde x_i(t)$ is constructed, where $R_i=\text{blockdiag}\{D_{q_{i1}},D_{q_{i2}}, \cdots,D_{q_{i{m_i}}}\}T_i\in \mathbb R^{m_i\times n_i}$ with ${D_{q_{ir}}}=[1, 0, \cdots, 0]$, $r=1,2,\cdots,m_i$.
It follows from  Lemma \ref{vector} (i) in Appendix \ref{lemmas} that the output $\varrho_i(t)$ has complete vector relative degree $(q_{i1},q_{i2},\cdots, q_{im_i})$.
By letting $\bar\varrho_{ir}(t)= [\varrho_{ir}(t), \varrho_{ir}^{(1)}(t), \cdots, \varrho_{ir}^{(q_{ir}-1)}(t)]^T$, it then follows from (\ref{refq2}) and Lemma \ref{vector} (i) in Appendix \ref{lemmas} that
\begin{align}\label{subsystem1}
	\dot{\bar\varrho}_{ir}(t)=A_{\varrho_{ir}}\bar\varrho_{ir}(t)+B_{\varrho_{ir}}\omega_{ir}(t), r=1,2,\cdots,m_i,
\end{align}
where $A_{\varrho_{ir}}=\left[\begin{matrix}
	\textbf {0}_{q_{ir}-1}& {I}_{q_{ir}-1}\\
	0&\textbf 0_{q_{ir}-1}^T
\end{matrix}	\right]$ 
and $B_{\varrho_{ir}}=\left[
\begin{matrix}
	\textbf {0}_{q_{ir}-1}\\
	1\\
\end{matrix}\right]$.
It follows from Lemma \ref{lemmafinite} in Appendix \ref{lemmas} that systems (\ref{subsystem1}) with control laws $\omega_{ir}(t)$, $r=1,2,\cdots,m_i$ are globally fixed-time stable at the origin with an upper bound $t_c=\max_{ir}\{t_{c_{ir}}\}$, in which $t_{c_{ir}}$ is defined in (\ref{t_cij}). 
Recall that $\bar x_i(t)=[\bar\varrho_{i1}^T(t), \bar\varrho_{i2}^T(t), \cdots, \bar\varrho_{im_i}^T(t)]^T$, 
{$\bar x_i(t)=T_i\tilde x_i(t)$} with $T_i$ being nonsingular. Then, one can infer that  $\tilde x_i(t)=0$, for all $t\geq t_a$ and any initial state $\tilde x_i(0)\in \mathbb R^{n_i}$, where $t_a=t_o+t_c$.
	Therefore, one has $e_i(t)=0,$ for all $t\geq t_a$.
The proof is hereby completed.
	\section{A Simulation Example}
Consider a MAS composed of an exosystem and five agents, with its corresponding communication graph illustrated in Figure \ref{fig1}. The adjacency weights are set to $a_{ij}=1$ if $(j,i)\in \mathcal E$, and $a_{ij}=0$ otherwise.
One can observe that Assumption \ref{ass1} is satisfied.
Agent $i$ is modeled as an inverted pendulum on a cart \cite{wang2019robust}.
The linearized model of agent $i$ is given by (\ref{agent}), where 
\begin{align*}A_i=\left[
	\begin{matrix}
		\renewcommand{\arraystretch}{0.7}
			\vspace{-0.2em}
		0&1&0&0\\
			\vspace{-0.2em}
		0&0&g&0\\
			\vspace{-0.2em}
		0&0&0&1\\
			\vspace{-0.2em}
		0&\frac{f_i}{l_{i} M_{1i}}&\frac{(M_{1i}+M_{2i})g}{l_{i}M_{1i}}&-\frac{f_i}{M_{1i}}
	\end{matrix}\right],B_i=\left[
	\begin{matrix}
		\renewcommand{\arraystretch}{0.7}
			\vspace{-0.2em}
		0\\
			\vspace{-0.2em}
		0\\
			\vspace{-0.2em}
		0\\
			\vspace{-0.2em}
		\frac{1}{l_{i}M_{1i}}
	\end{matrix}\right],\\
	E_i=\left[
	\begin{matrix}
		\renewcommand{\arraystretch}{0.7}
			\vspace{-0.2em}
		0&0\\
			\vspace{-0.1em}
		\frac{\chi_{i1}+\chi_{i2}}{M_{1i}}&0\\
			\vspace{-0.2em}
		0&0\\
			\vspace{-0.1em}
		\frac{\chi_{i2}}{l_{i} M_{1i}}&0
	\end{matrix}\right],
	C_i=\left[
	\begin{matrix}
		1&0&-l_{i}&0
	\end{matrix}\right], F_i=\left[
	\begin{matrix}
		1&2
	\end{matrix}\right].\end{align*}
Table I presents the parameters of the agents.
The system matrix of exosystem (\ref{exo}) is given as $
	S=
\begin{bmatrix}
		0&-0.2\\
			\vspace{-0.2em}
		0.2&0\\
	\end{bmatrix}.
$
One can verify that Assumption \ref{reglator} is satisfied.


\begin{small}
	\begin{figure}[http]
		\centering
		\captionof{table}{{The parameters of the $i$th agent.}}
		\scalebox{0.85}{
			\begin{tabular}{|l|l|}
				\hline
				Parameters&Meaning\\
				\hline
				$M_{1i}=2\cdot i$ kg& Mass of cart\\ 	
				$M_{2i}=0.5\cdot i$ kg& Mass of pendulum\\ 
				$l_i=1\cdot i $ m & Length of pendulum\\
				$g=9.8$ $\text{m/s}^2$&Gravitational acceleration\\
				$f_i=0.2$ &Friction coefficient\\
				$\chi_{i1}=0.3 \cdot i$, $\chi_{i2}=0.5 \cdot i$ & Coefficients related to the disturbance \\
				\hline
		\end{tabular}}
	\end{figure}
\end{small}
\begin{figure}[htbp]
		\vspace{-0.2cm}  
	\centering 
	\begin{minipage}[http]{0.36\textwidth} 
		\centering 
		\includegraphics[height=1.9cm,width=5.45cm]{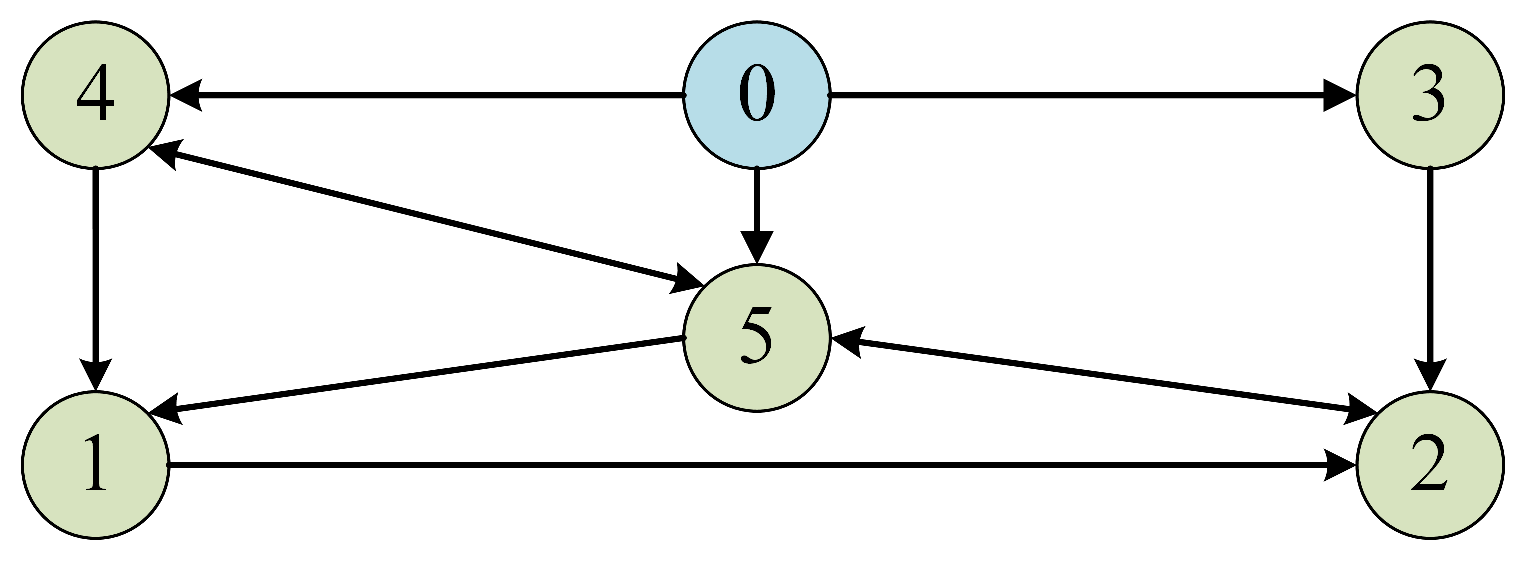}
		\caption{\label{fig1} {Communication graph $\mathcal {G}$}.}
	\end{minipage}
\end{figure}

Choose $ K=1.78\cdot I_5$, which guarantees that $\mathcal H^T K+K \mathcal H-2I_5\geq 0$.
Over the time interval $[0,160]$, the parameters for DoS attacks are selected as follows:  $\nu_d=0.2$ and $p_d=4.9$. 
The design parameters of the observer (\ref{ob}) are specified as $\mu_1=7.5$, $\mu_2=7$, $\mu_3=11$, $\alpha=0.7$, $\beta=1.45$. 
Then, one can calculate that $c_1=21.4662$, $c_2=10.3531$, $c_3=21.8649$, $c_4=10.2899$, $c_5=0.5727$, $\hat c_1= 0.0736$, $\hat c_2=0.1011$, $\tilde c_1=0.0762$, $\tilde c_2=0.1052$, and $t_o=79.5692$ sec. 
It is evident that conditions (i)--(iii) in Theorem \ref{lemmat} are met.
Select $R_{i}=[1, 0, 0, 0]$ with $q_{i1}\!=\!4$, for $i\in \mathcal V_a$.
The control law $u_i(t)$ is designed as in (\ref{controller1n}) with 
$\psi_{i1}^1=2$, $\psi_{i1}^2=4.5$, $\psi_{i1}^3=4.5$, $\psi_{i1}^4=1.8$,
$\bar \psi_{i1}^1=1$, $\bar \psi_{i1}^2=4$, $\bar \psi_{i1}^3=5$, $\bar \psi_{i1}^4=4$,
$\gamma_{i1}^1=0.2727$, $\gamma_{i1}^2=0.3333$, $\gamma_{i1}^3=0.4286$, $\gamma_{i1}^4=0.6$, 
$\bar \gamma_{i1}^1=3$, $\bar \gamma_{i1}^2=2$, $\bar \gamma_{i1}^3=1.5$, $\bar \gamma_{i1}^4=1.2$, 
for $i\in \mathcal V_a$.
Select $Q_{i1}=\bar Q_{i1}=0.02I_4$.
From  (\ref{t_cij}), one has $t_c=69.6789$ sec. In combination with $t_o= 79.5692$ sec, the settling time upper bound for the resulting closed-loop system is obtained as $t_a=149.2480$ sec.

Several simulations  have been conducted over $[0,160]$ under different DoS attacks patterns that satisfy Assumption \ref{assdos} and with various randomly generated initial states for the observers, exosystem and agents within the range of $(-10, 10)$.
Figure \ref{fig2} presents the simulation results for a typical initial state, where $v(0)=[ 1.7135,-2.3265]^T$, 
$x_1(0)=[9.7014,-7.0776,$ $4.2279,-2.6172]^T$,
$x_2(0)=[-4.4514, -3.6204, 3.8463, $ $ -2.5113]^T$, 
$x_3(0)=[3.9321,8.9157,-7.4931,-0.5721]^T$, 
$x_4(0)=$ $[-2.5113,4.0533,4.4166,-4.4445]^T$, 
$x_5(0)=[-3.1023,6.4695,4.3026,4.3101]^T$, and $\Pi_D(0,25)=[0.02,$ $6)\cup[8,9.2)\cup[13,14.5)\cup[25.3,27.4)\cup[39.3,43.2)\cup[62.9,$ $66.4)\cup[77.2,79.3)\cup[83.2,85.5)\cup[113.2,123.5)\cup[153.2,$ $155.5)$.
Figure \ref{fig2} (a) illustrates the fixed-time convergence to zero of the estimation errors $\tilde \eta_i(t)$, for $i\in \mathcal V_a$. Additionally, Figure \ref{fig2} (b) illustrates that the regulated outputs $e_i(t)$, for $i\in \mathcal V_a$ approach zero in a fixed time under the proposed distributed controller, even though the MAS is over directed graphs under DoS attacks.
For comparison, several simulations have been conducted for the distributed resilient fixed-time observer (\ref{ob}) and the distributed resilient exponentially converging observer adapted from \cite{10233067}. The parameters of the fixed-time observer (\ref{ob}) remain the same, except for $\mu_1$.
To be fair, $\mu_1$ is set to the minimum value satisfying each observer's stability condition: $\mu_1=6.6$ for the fixed-time observer (\ref{ob}) and $\mu_1=0.5$ for the exponentially converging observer.
Figure \ref{figD} illustrates the convergence profiles $\|\tilde \eta(t)\|$ for both observers with different initial states. One can observe that the settling times of the exponentially converging observer vary significantly with different initial states while the settling times of the fixed-time observer (\ref{ob}) are almost the same independent of initial states, and furthermore, the convergence times of the fixed-time observer (\ref{ob}) are much shorter than those of the exponentially converging observer.

	\begin{figure}[htbp]
	\vspace{-0.2cm}  
		\centering
		\subfigure[]
		{\includegraphics[height=3.5cm,width=4.2cm]{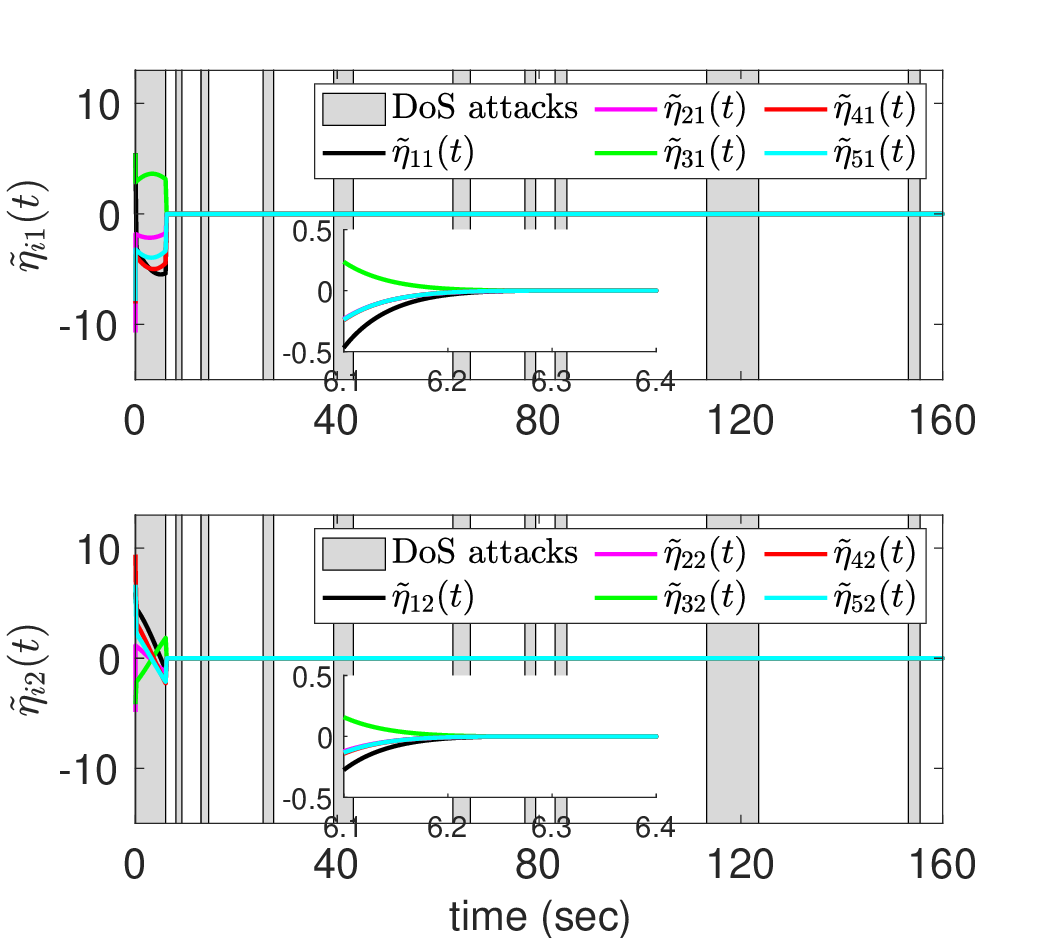}}
		\hspace{-0.5cm}
		\subfigure[]
		{\includegraphics[height=3.5cm,width=4.2cm]{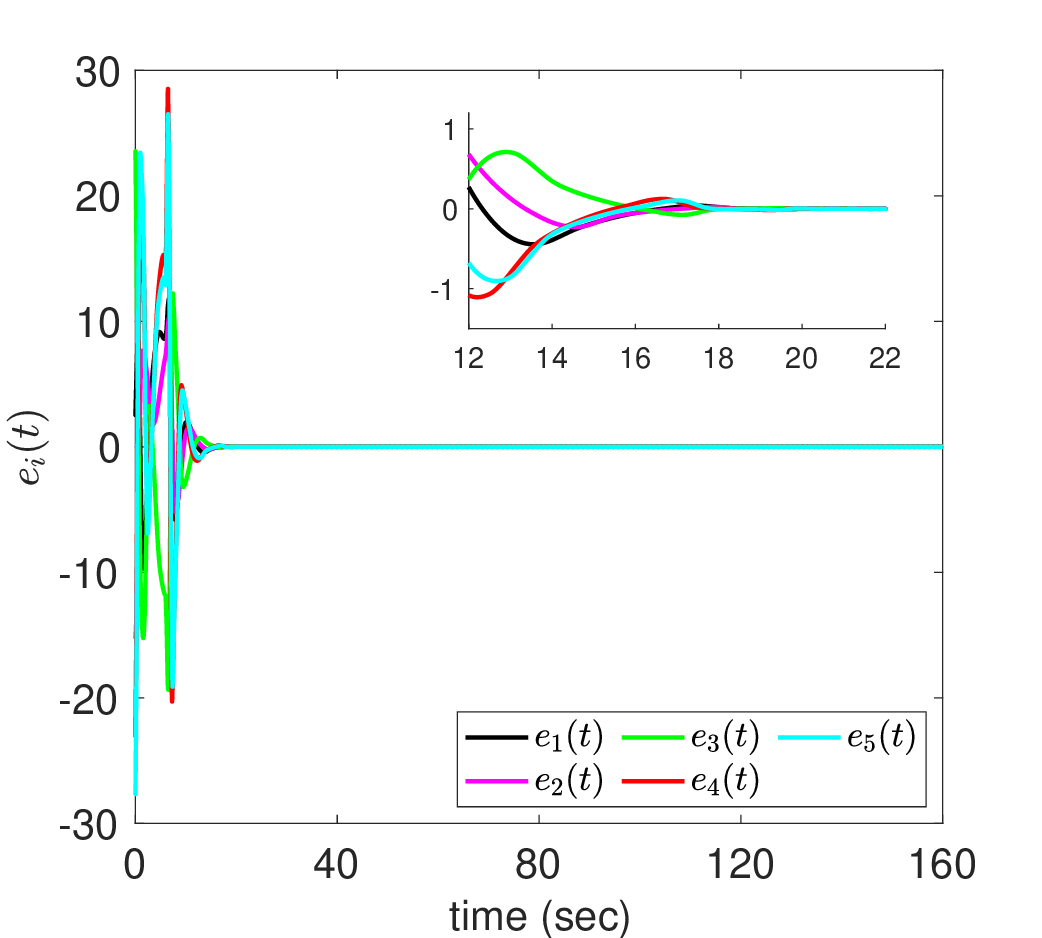}}
	\vspace{-0.3cm} 
	\caption{{Estimation error and regulated output profiles of the MAS under DoS attacks.}} 
	\label{fig2}
\end{figure}	
	\begin{figure}[htbp]
	\vspace{-0.2cm}  
	\centering
		{\includegraphics[height=2.9cm,width=4.1cm]{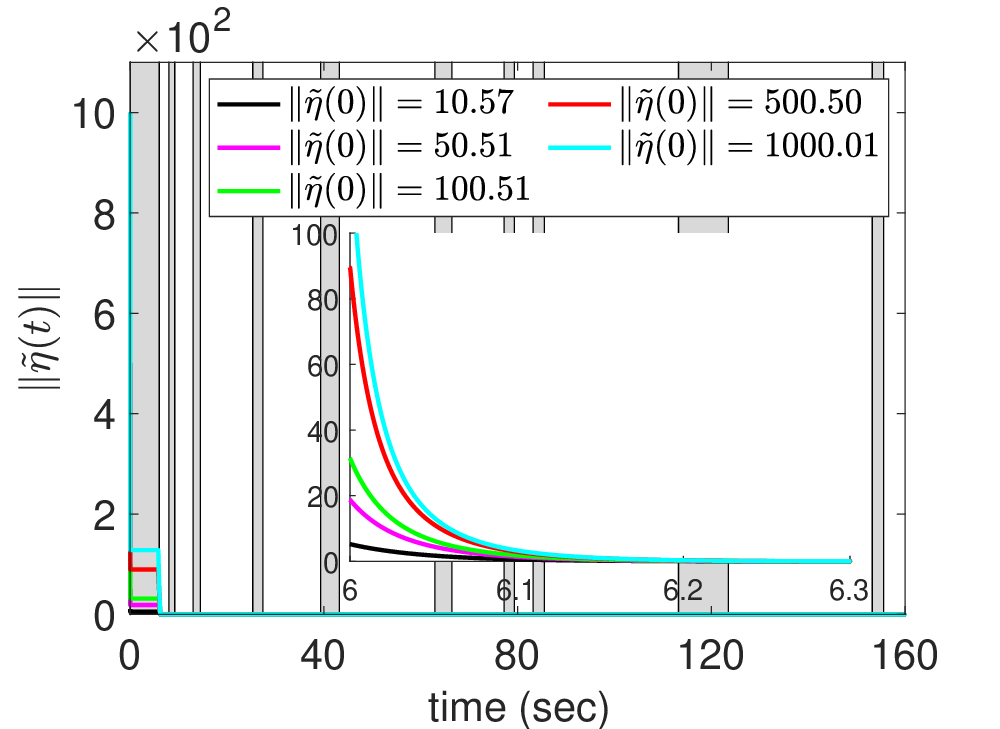}}
		{\includegraphics[height=2.9cm,width=4.1cm]{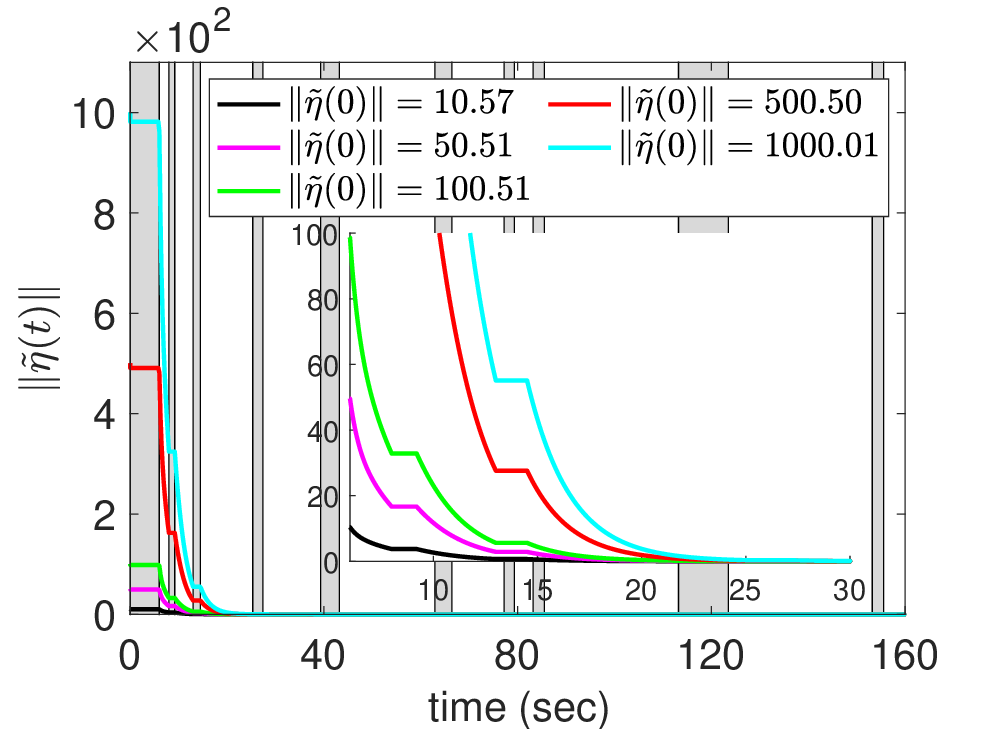}}
	\caption{{Profiles of $\|\tilde \eta(t)\|$ for the fixed-time observer (\ref{ob}) (left) and the exponentially converging observer adapted from \cite{10233067} (right).}} 
\label{figD}
\end{figure}


\begin{figure}[http]
		\centering
		\captionof{table}{{Comparison of convergence performance.}}		
				\scalebox{0.85}{
\begin{tabular}{|c|c|c|c|}
	\hline
	$\alpha$&$\beta$ & $\mu_1$, $\mu_2$, $\mu_3$ & Time (sec) \\
	\hline
	\multirow{3}{*}{0.7} & \multirow{3}{*}{1.45} & 7.5, 7, 11.5 & 6.3057 \\
	\cline{3-4}
	& & 8.5, 9.5, 12.5 & 6.2414 \\
	\cline{3-4}
	& & 9.5, 13, 14.5 & 6.1877 \\
	\hline
\multirow{3}{*}{0.75}& 1.35 & \multirow{3}{*}{9.9, 18.6, 18.6} & 6.1599 \\
	\cline{2-2} \cline{4-4}
& 1.45 & & 6.1583 \\
	\cline{2-2} \cline{4-4}
& 1.55 & & 6.1564 \\
	\hline
0.85& 	\multirow{3}{*}{1.35} & \multirow{3}{*}{8, 11.2, 15} & 6.3111 \\
	\cline{1-1} \cline{4-4}
	 0.8& & & 6.2715\\
	\cline{1-1} \cline{4-4}
	0.75 & & & 6.2391 \\
	\hline
\end{tabular}}
\end{figure}
Furthermore, several simulations are performed to study the impact of the observer design parameters $\alpha$, $\beta$, $\mu_1$, $\mu_2$ and $\mu_3$ on the convergence speed of $\|\tilde \eta(t)\|$. The comparison results are recorded in Table II with an error tolerance threshold of $10^{-4}$. It can be observed from Table II that larger values of $\mu_1$, $\mu_2$, $\mu_3$, $\beta$ and/or a smaller value of $\alpha$ lead to a faster convergence of $\|\tilde\eta(t)\|$.

\section{Conclusions}
This work presents a distributed resilient fixed-time controller for heterogeneous MASs over directed graphs under DoS attacks.
The proposed controller can solve the resilient fixed-time COR problem of the concerned MAS under some mild conditions. Some guidelines for design of the distributed resilient fixed-time controller are also provided.
Future work could explore prescribed-time cooperative control under DoS attacks and extend the fixed-time/prescribed-time framework to more general attack modes instead of the zero-topology attacks.

\begin{appendices}	
\section{Several Technical Lemmas}\label{lemmas}
Several technical lemmas, which are utilized in the proof of the main results of this work, are introduced in this appendix.
\begin{lemma}\label{vector}\citep{mueller2009normal,luenberger1967canonical}
	Consider the linear system 
	\begin{align}\label{linear}
		\dot x(t) & = Ax(t) + Bu(t),
	\end{align}where $x(t) \in \mathbb{R}^n$ and $u(t) \in \mathbb{R}^m$ are its state and input, respectively, $(A,B)$ is controllable and $B$ is of full column rank. 

(i) There is a linear coordinate transformation $y(t)=Rx(t)\in \mathbb R^m$ such that $y(t)$ has complete vector relative degree $(q_1,q_2, \cdots,q_m)$. Moreover, $
\text{det}\left(X\right)\neq 0$, and $R_jA^lB=0$, for $l=0,1, \cdots,q_{j}-2$, $j=1,2,\cdots,m$, where  $X=\left[\begin{matrix}
	\vspace{-0.7em}R_1A^{q_1-1}B\\
	\vdots\\
	R_mA^{q_m-1}B
\end{matrix}	\right]$ and 
$R_j$ is the row indexed by $j$ in $R$. 

(ii) There are nonsingular transformations $\bar u(t)=G {u(t)}$ and $\bar x(t)=T{x(t)}$ such that system (\ref{linear}) is transformed into
	\begin{align}
\dot {\bar x}(t)=\bar A \bar x(t)+\bar B\bar u(t),
	\end{align}where
	$\bar A=[\bar A_{lj}]$, $\bar B=blockdiag\{	\Theta_{q_1},\Theta_{q_2},\cdots, \Theta_{q_m}\}$,
	$\bar A_{{ll}}=
	\left[\begin{matrix}
		\textbf 0_{q_{l}-1}& I_{q_{l}-1}\\
		\zeta_{{ll}}^1&\bar \zeta_{{ll}}\\
	\end{matrix}\right]$, $\bar A_{lj}=\Theta_{q_l}\zeta_{lj}^T,$
	$\bar \zeta_{ll}=[ \zeta_{ll}^2, \cdots, \zeta_{ll}^{q_l}]$,
	$\zeta_{lj}=[\zeta_{lj}^1, $ $\zeta_{lj}^2, \cdots, \zeta_{lj}^{q_j}]$, $l\neq j$, $l,j=1,2,\cdots,m$,
	$\Theta_{q_l}=[0, \cdots, 0, 1]^T\in \mathbb R^{q_l}$, $q_1+q_2+\cdots+q_m=n$. 
\end{lemma}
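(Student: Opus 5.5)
The approach is to build both parts from the Luenberger controllability canonical form \citep{luenberger1967canonical}. First, (i) comes from a suitable choice of the rows $R_j$ of $M^{-1}$ for the Luenberger matrix $M$. Then (ii) follows by using the output derivatives $R_jA^lx$ as new coordinates and setting $G=X$.

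\emph{Step 1 (Luenberger indices).} Let $b_1,\dots,b_m$ be the columns of $B$; they are linearly independent since $B$ has full column rank. Scan the columns of $[B,\,AB,\,\dots,\,A^{n-1}B]$ in the order $b_1,\dots,b_m,Ab_1,\dots,Ab_m,\dots$ and keep a vector if it is linearly independent of those already kept. A standard observation is that if $A^lb_k$ is rejected, then so is $A^{l+1}b_k$. Hence the kept set has the form $\{A^lb_k: 0\le l\le q_k-1\}$ with $q_k\ge 1$. Controllability gives $q_1+\cdots+q_m=n$, so
$$M=[\,b_1,\dots,A^{q_1-1}b_1,\ \dots,\ b_m,\dots,A^{q_m-1}b_m\,]$$
is nonsingular.

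\emph{Step 2 (definition of $R$).} Let $\sigma_j=q_1+\cdots+q_j$ and let $R_j$ be the $\sigma_j$-th row of $M^{-1}$. Then $R_jA^lb_k=\delta_{jk}\delta_{l,q_j-1}$ whenever $A^lb_k$ is a column of $M$, i.e.\ whenever $l\le q_k-1$.

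The real content is the columns $A^lb_k$ with $l\ge q_k$ and $l\le q_j-2$. For these I expand $A^{q_k}b_k$ by the scanning rule. It lies in the span of the vectors $A^{s}b_i$ that were kept before it, so every term has $s\le q_k$, with $s<q_k$ when $i\ge k$. Multiplying this relation by $A^{l-q_k}$ and inducting on $l$, I claim that $A^lb_k$ is a combination of columns $A^{s}b_i$ of $M$ that all satisfy $s\le l$. Since $l\le q_j-2$, none of these columns is $A^{q_j-1}b_j$, and therefore $R_jA^lb_k=0$. This gives $R_jA^lB=0$ for $l=0,\dots,q_j-2$.

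\emph{Step 3 (invertibility of $X$).} The same expansion applied at $l=q_j-1$ shows that $X=[R_jA^{q_j-1}b_k]_{jk}$ is triangular with unit diagonal, after ordering the indices by $q$ and then by index. Hence $\det X\ne 0$, which completes (i).

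I expect this bookkeeping to be the main obstacle: tracking which coefficients in the expansion of $A^{q_k}b_k$ can reach the pivot column $A^{q_j-1}b_j$ when $q_k<q_j$ versus $q_k\ge q_j$. As a fallback, I would invoke the relative-degree result in \citep{mueller2009normal} directly if the induction gets unwieldy.

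\emph{Step 4 (proof of (ii)).} Define $T$ by stacking, for $j=1,\dots,m$, the rows $R_j,\,R_jA,\,\dots,\,R_jA^{q_j-1}$. By Step 2, $TM$ is block triangular with nonsingular (anti-identity) diagonal blocks, so $T$ is nonsingular.

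With $\bar x=Tx$, the vanishing relations give $\frac{d}{dt}(R_jA^lx)=R_jA^{l+1}x$ for $l\le q_j-2$. The last row of block $j$ satisfies $\frac{d}{dt}(R_jA^{q_j-1}x)=R_jA^{q_j}T^{-1}\bar x+R_jA^{q_j-1}Bu$. Thus $TAT^{-1}$ consists of shift blocks whose last rows are arbitrary; these give the entries $\zeta$ of $\bar A_{ll}$, and $\bar A_{lj}=\Theta_{q_l}\zeta_{lj}^T$. Moreover $TB=\bar B X$, so taking $G=X$ and $\bar u=Xu$ yields $\dot{\bar x}=\bar A\bar x+\bar B\bar u$ with the stated block structure.
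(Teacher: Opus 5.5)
Steps 1--3 are sound, and the bookkeeping is easier than you expect. By the greedy scan, every $A^lb_k$ lies in the span of the kept vectors $A^sb_i$ with $(s,i)$ lexicographically $\preceq (l,k)$ (power first, then index). This gives $R_jA^lB=0$ for $l\le q_j-2$ at once. It also gives $X_{jk}=R_jA^{q_j-1}b_k=0$ for $j>k$, so $X$ is unit upper triangular in the natural order. Step 3 needs this lexicographic form: the weaker ``$s\le l$'' does not exclude $A^{q_j-1}b_j$ when $l=q_j-1$, and induction on $l$ alone does not handle the terms $A^lb_i$, $i<k$, at the same level.

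\textbf{The gap is in Step 4.} The claim that $TM$ is block triangular with anti-identity diagonal blocks is false in general. The $(j,k)$ block of $TM$ has entries $R_jA^{l+s}b_k$ with $0\le l\le q_j-1$ and $0\le s\le q_k-1$. Steps 2--3 only determine these for $l+s\le \max\{q_j-1,q_k-1\}$; entries with $l+s\ge\max\{q_j,q_k\}$ are unconstrained. For a concrete failure, take $n=4$, $m=2$, $B=[e_1,e_3]$, and $Ae_1=e_2$, $Ae_2=e_4$, $Ae_3=e_4$, $Ae_4=e_2$. Then $q_1=q_2=2$, $M=I_4$, $R_1=e_2^T$, $R_2=e_4^T$, and $R_1A^2b_2=R_2A^2b_1=1$. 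Both off-diagonal blocks of $TM$ are nonzero, so no ordering of the blocks makes it block triangular. As written, the nonsingularity of $T$ (the crux of (ii)) is not established.

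The conclusion is still true, and the repair is standard. Suppose $\sum_j\sum_{l=0}^{q_j-1}c_{jl}R_jA^l=0$ with not all $c_{jl}$ zero. Let $r^*=\min\{q_j-1-l:\ c_{jl}\neq 0\}$ and multiply on the right by $A^{r^*}B$. Every term with $q_j-1-l>r^*$ vanishes, since then $l+r^*\le q_j-2$. The surviving terms give $\sum_{j:\,q_j>r^*}c_{j,q_j-1-r^*}X_j=0$, where $X_j$ is the $j$th row of $X$. Since $\det X\neq 0$, all these coefficients vanish, contradicting the choice of $r^*$. Equivalently, order the rows of $TM$ by $q_j-1-l$ and its columns by the power $s$. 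Then $TM$ is block upper triangular, and its diagonal blocks are principal submatrices of the unit triangular $X$.

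With this replacement, the rest of Step 4 is correct: the shift structure of $TAT^{-1}$, $TB=\bar BX$, and $G=X$. The paper itself gives no proof of this lemma; it cites Luenberger and Mueller. In the proof of Theorem~\ref{main theorem} it reads $R_i$ off from $T_i$ (the first entry of each chain), so it goes (ii)$\Rightarrow$(i), while you go (i)$\Rightarrow$(ii). Both directions rest on the nonsingularity of the same $T$.
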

\begin{lemma}\label{lemmafinite}\citep{basin2016continuous}
Consider the system 
\begin{align}\label{order}
		\dot {\rho}(t)=\left[\begin{matrix}
			\textbf {0}_{n-1}& {I}_{n-1}\\
			0&\textbf 0_{n-1}^T
		\end{matrix}	\right] \rho(t)+\left[\begin{matrix}
		\textbf{0}_{n-1}\\
		1
	\end{matrix}	\right] \omega(t), \rho(0)=\rho_0.
	\end{align}There is $\epsilon\in(0,1)$ such that for $\gamma\in (1-\epsilon,1)$ and $\bar \gamma\in (1,1+\epsilon)$, the origin is globally fixed-time stable under the control law 
$\omega(t)=-\sum_{r=1}^{n}(\psi_r\text{sig}^{\gamma_r}(\rho_r(t))+\bar \psi_r\text{sig}^{\bar \gamma_r}(\rho_r(t))),$
	where $\gamma_r$ and $\bar\gamma_r$ are chosen as follows:
	$\gamma_{r-1}=\tfrac{\gamma_r\gamma_{r+1}}{2\gamma_{r+1}-\gamma_r}$, 
	$\bar \gamma_{r-1}=\tfrac{\bar \gamma_r\bar \gamma_{r+1}}{2\bar \gamma_{r+1}-\bar \gamma_r}$, 
	$r=2, \cdots, n$, $\gamma_{n+1}=\bar \gamma_{n+1}=1$, $\gamma_n=\gamma$, $\bar \gamma_n=\bar\gamma$, and $\psi_r$, $\bar \psi_r$, $r\!=\!1,2,\!\cdots\!,n$,
	are selected such that matrices
	\begin{align*}
		\Psi\!=\left[
		\begin{array}{@{}cccc@{}} 
			\vspace{-0.5em}0 & \hspace{-0.5em} 1 & \hspace{-0.5em} \cdots & \hspace{-0.5em} 0 \\
			\vdots & \hspace{-0.5em} \vdots & \hspace{-0.3em} \ddots & \hspace{-0.5em} \vdots \\
			0 & \hspace{-0.5em} 0 & \hspace{-0.5em} \cdots & \hspace{-0.5em} 1 \\
			-\psi_{1} & \hspace{-0.5em} -\psi_{2} & \hspace{-0.5em} \cdots & \hspace{-0.5em} -\psi_{n}
		\end{array}\right]\!,\quad
		\bar \Psi\!\!=\!\!\left[\!
		\begin{array}{@{}cccc@{}} 
			\vspace{-0.5em}0 & \hspace{-0.5em} 1 & \hspace{-0.5em} \cdots & \hspace{-0.5em} 0 \\
			\vdots & \hspace{-0.5em} \vdots & \hspace{-0.5em} \ddots & \hspace{-0.5em} \vdots \\
			0 & \hspace{-0.5em} 0 & \hspace{-0.5em} \cdots & \hspace{-0.5em} 1 \\
			-\bar \psi_{1} & \hspace{-0.5em} -\bar\psi_{2} & \hspace{-0.5em} \cdots & \hspace{-0.5em} -\bar\psi_{n}
		\end{array}\right]
	\end{align*}are Hurwitz. In addition, the settling time is upper-bounded by 
\begin{equation}
	t_c=\frac{\gamma\lambda_{M}(P)(\lambda_{M}(P))^{\frac{1-\gamma}{\gamma}}}{(1-\gamma)\lambda_{m}(Q)}+\frac{\bar\gamma\lambda_{M}(\bar P)(\lambda_{M}(\bar P))^{\frac{\bar \gamma-1}{\bar\gamma}}}{\lambda_{m}(\bar Q)(\bar \gamma-1)},
	\end{equation}
 where $P>0$ and $\bar P>0$ satisfy the Lyapunov equations:
	$
	P\Psi+\Psi^TP=-Q
	$ and
	$
	\bar P\bar \Psi+\bar \Psi^T\bar P=-\bar Q
	$,
	for any $Q>0$ and $\bar Q>0$, respectively.
\end{lemma}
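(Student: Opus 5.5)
The plan is to use the geometric homogeneity approach of \citep{bhat2005geometric}, applied separately near the origin and near infinity, with a single quadratic-type Lyapunov function in homogeneous coordinates. First I would interpret the exponent recursion $\gamma_{r-1}=\tfrac{\gamma_r\gamma_{r+1}}{2\gamma_{r+1}-\gamma_r}$ with $\gamma_{n+1}=1$, $\gamma_n=\gamma$. It is exactly the condition that, under dilation weights $w_r=1+(r-1)\kappa$ with $\kappa=\gamma-1<0$, the vector field $\dot\rho_r=\rho_{r+1}$, $\dot\rho_n=-\sum_r\psi_r\text{sig}^{\gamma_r}(\rho_r)$ is homogeneous of degree $\kappa$. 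In this setting $\gamma_r=w_{n+1}/w_r$. Likewise, the $\bar\gamma$-part is homogeneous of positive degree $\bar\kappa=\bar\gamma-1$.

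With this, the full closed loop is homogeneous in the bi-limit. Near $0$ it is dominated by the $\gamma$-field, since higher powers of small quantities are negligible. Near $\infty$ it is dominated by the $\bar\gamma$-field. At $\gamma=\bar\gamma=1$ both fields reduce to the linear systems $\dot\rho=\Psi\rho$ and $\dot\rho=\bar\Psi\rho$, which are Hurwitz by assumption.

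Second, I would construct Lyapunov functions adapted to each limit. For the $\gamma$-part, I take $V(\rho)=z^TPz$ with $z_r=\text{sig}^{1/w_r}(\rho_r)$ (suitably rescaled), where $P\Psi+\Psi^TP=-Q$. For the $\bar\gamma$-part, I take $\bar V$ analogously with $\bar P$ and $\bar w_r$. The key estimate is that on the homogeneous unit sphere the derivative of $V$ along the $\gamma$-field equals $-z^TQz$ plus an error. This error is continuous in $\gamma$ and vanishes at $\gamma=1$, so by compactness of the sphere there is $\epsilon\in(0,1)$ such that $\dot V\le-\tfrac{\lambda_m(Q)}{\lambda_M(P)}$ times the appropriate homogeneous norm for $\gamma\in(1-\epsilon,1)$.

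Homogeneity then propagates this estimate to give $\dot V\le -c\,V^{p}$ with $p=\tfrac{2\gamma-1}{\gamma}\cdot$(appropriate factor)$<1$. The constants are $c=\lambda_m(Q)/(\lambda_M(P)\lambda_M^{(1-\gamma)/\gamma}(P))$-type expressions, matching the first term of $t_c$. The same argument with $\bar\gamma\in(1,1+\epsilon)$ gives $\dot{\bar V}\le-\bar c\,\bar V^{\bar p}$ with $\bar p>1$.

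Third, I would handle the cross terms. The $\bar\gamma$-feedback acts as a perturbation of the $\gamma$-system in a neighborhood of $0$, and vice versa near infinity. Both perturbations have the same sign structure, since both parts are stabilizing with Hurwitz gain matrices. I would show, possibly after shrinking $\epsilon$, that on the sublevel set $\{V\le1\}$ the extra term only adds negative-definite contributions up to an error absorbed by half of $-cV^p$. A symmetric argument applies to $\bar V$ on $\{\bar V\ge1\}$.

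Integrating the differential inequalities gives a time at most $1/(\bar c(\bar p-1))$ to enter the unit level set from any initial state. It then gives a time at most $1/(c(1-p))$ from that level set to the origin. Summing these two bounds yields the stated $t_c$ independent of $\rho_0$, and Lyapunov stability follows from $V$ being a local Lyapunov function. Solutions are understood in Filippov's sense, but the feedback is continuous, so standard Carathéodory solutions suffice.

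The main obstacle is the third step: making the single-$\epsilon$ perturbation argument uniform. The mixed terms $\text{sig}^{\bar\gamma_r}$ in the $V$-derivative, and $\text{sig}^{\gamma_r}$ in the $\bar V$-derivative, are not homogeneous with respect to the same dilation. My first attempt would be to bound them on the compact annulus between the two level sets directly by continuity. Outside that annulus I would use the bi-limit dominance estimates from \citep{bhat2005geometric}, falling back on the explicit construction in \citep{basin2016continuous} if the constants do not close.
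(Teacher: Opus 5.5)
The paper does not prove this lemma; it imports it from the cited reference. Your two-phase structure (a bounded time to reach a unit level set through the $\bar\gamma$-part, then a bounded time to reach the origin through the $\gamma$-part) is what the form of $t_c$ reflects. However, two of your steps fail.

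\textbf{Step 2: your $V$ is not a strict Lyapunov function.} Take $V=z^TPz$ with $z_r=\text{sig}^{1/w_r}(\rho_r)$ and your weights, so $w_1=1$ and $w_r<1$ for $r\ge2$ when $\gamma<1$. Each $z_r$ with $r\ge2$ then has zero slope at $\rho_r=0$. At $\rho=(\rho_1,0,\dots,0)$ with $\rho_1\neq0$ you get $\dot z_1=\rho_2=0$ and $\dot z_r=\tfrac{1}{w_r}|\rho_r|^{1/w_r-1}\dot\rho_r=0$ for $r\ge2$. Hence $\dot V=0$ while $V=P_{11}\rho_1^2>0$, and this holds for every $\gamma<1$.

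Your claim that the error is continuous in $\gamma$ breaks exactly here. The error contains factors $|z_r|^{1-w_r}$, which tend to $1$ only pointwise off $\{z_r=0\}$, so compactness of the sphere gives no uniform $\epsilon$.

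Renormalizing does not help. Your choice $w_r=1+(r-1)\kappa$ with $\kappa=\gamma-1$ does not even give $\gamma_n=\gamma$; that requires $w_n=1$. With $w_n=1$ you get $1/w_r<1$ for $r<n$, and $V$ is not $C^1$ on $\{\rho_r=0\}$. For any degree, distinct weights force one of these two degeneracies.

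The standard fix, as in Bhat--Bernstein's treatment of the integrator chain, runs the continuity argument for $\rho^TP\rho$ on the fixed ellipsoid $\{\rho^TP\rho=1\}$, where the field converges uniformly to $\Psi\rho$, and then uses homogeneity. That yields finite-time stability only existentially. Moreover, any argument that sacrifices part of $-\rho^TQ\rho$ (your ``half'') cannot recover the exact constants $\lambda_m(Q)$, $\lambda_M(P)$ in $t_c$.

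\textbf{Step 3 is not a technicality, and your claim that the two feedbacks have ``the same sign structure'' is false.} At $\gamma=\bar\gamma=1$ the full loop is neither $\Psi$ nor $\bar\Psi$. It is the companion matrix $\Psi_+$ with last row $-(\psi_1+\bar\psi_1,\dots,\psi_n+\bar\psi_n)$. For $\gamma,\bar\gamma$ near $1$ the loop satisfies $\|\dot\rho-\Psi_+\rho\|\le\eta\|\rho\|$ on an annulus $\{e^{-L}\le\|\rho\|\le e^{L}\}$, with $\eta\to0$ and $L\to\infty$.

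For $n\le3$, $\Psi_+$ is automatically Hurwitz, but not for $n\ge4$. Take $(\psi_1,\dots,\psi_4)=(0.1,2,1,4)$ and $(\bar\psi_1,\dots,\bar\psi_4)=(8,2,8,0.5)$. Then $\Psi$ and $\bar\Psi$ are Hurwitz, with Routh first columns $1,4,0.5,1.2,0.1$ and $1,0.5,4,1,8$. But $\Psi_+$ has Routh first column $1,4.5,8.11,-0.49,8.1$, so it has two unstable eigenvalues.

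For such gains, start a trajectory in $\{V\le1\}$, at a fixed distance from the origin, in the unstable eigenspace of $\Psi_+$. A Chetaev function for $\Psi_+$ keeps increasing along it while it stays in the annulus, so the trajectory must leave $\{V\le1\}$. This contradicts the decrease you want to prove there. Shrinking $\epsilon$ only brings the loop closer to $\Psi_+$ on that region, so it makes things worse.

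Consequently, no gluing argument that uses only Hurwitzness of $\Psi$ and $\bar\Psi$ separately can work. You need a hypothesis coupling $\psi$ and $\bar\psi$ across the crossover scales. For gains like these one expects, as $\gamma,\bar\gamma\to1$, an attracting periodic orbit at the scale where the two feedbacks balance, so the statement itself appears to need such a condition. You would also still owe an argument for passing from $\{\bar V\le1\}$ into the region where the $V$-estimate holds, since these two sublevel sets are unrelated in general.
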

\begin{lemma}\label{lem2}\citep{hardy1952,zuo2015nonsingular,song2021distributed} 
	For $x_{ij}\geq0$, if $p\in(0,1]$ and $q\in(1,\infty)$, one has
\begin{align*}
	(nm)^{p-1}\sum_{i=1}^{n}\sum_{j=1}^{m}x_{ij}^p\leq 
	(\sum_{i=1}^{n}\sum_{j=1}^{m}x_{ij})^p
	\leq \sum_{i=1}^{n}\sum_{j=1}^{m}x_{ij}^p,\\
\sum_{i=1}^{n}\sum_{j=1}^{m}x_{ij}^q\leq
	(\sum_{i=1}^{n}\sum_{j=1}^{m}x_{ij})^q
	\leq (nm)^{(q-1)}\sum_{i=1}^{n}\sum_{j=1}^{m}x_{ij}^q.
\end{align*}
\end{lemma}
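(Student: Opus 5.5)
The statement to be proved is the last lemma displayed, Lemma \ref{lem2}: the two-sided power-mean inequalities for a double sum. The plan is to flatten the double index. The $nm$ nonnegative numbers $x_{ij}$ are relabelled as a single sequence $y_1,\dots,y_M$ with $M=nm$, so it suffices to prove the one-index inequalities
\begin{align*}
M^{p-1}\textstyle\sum_k y_k^p\le (\sum_k y_k)^p\le \sum_k y_k^p,\qquad \sum_k y_k^q\le (\sum_k y_k)^q\le M^{q-1}\sum_k y_k^q,
\end{align*}
for $p\in(0,1]$ and $q\in(1,\infty)$. The degenerate case $\sum_k y_k=0$ forces all $y_k=0$, and then every inequality holds trivially. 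I will therefore assume $s:=\sum_k y_k>0$.

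For the two bounds that carry no $M$ factor, I normalize by setting $z_k=y_k/s\in[0,1]$, so that $\sum_k z_k=1$.
\begin{itemize}
\item When $p\le 1$, each $z_k\in[0,1]$ satisfies $z_k^p\ge z_k$. Summing gives $\sum_k z_k^p\ge 1$, and multiplying by $s^p$ yields $(\sum_k y_k)^p\le\sum_k y_k^p$.
\item When $q>1$, each $z_k$ satisfies $z_k^q\le z_k$. Summing gives $\sum_k z_k^q\le 1$, which is $\sum_k y_k^q\le(\sum_k y_k)^q$.
\end{itemize}

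The two bounds carrying $M^{p-1}$ and $M^{q-1}$ both follow from Jensen's inequality applied to the power function $\phi(t)=t^r$ on $[0,\infty)$, with the uniform weights $1/M$.
\begin{itemize}
\item For $r=p\in(0,1]$, $\phi$ is concave. Jensen gives $\frac1M\sum_k y_k^p\le(\frac1M\sum_k y_k)^p$. Multiplying by $M^p$ gives $M^{p-1}\sum_k y_k^p\le(\sum_k y_k)^p$.
\item For $r=q>1$, $\phi$ is convex. Jensen gives $(\frac1M\sum_k y_k)^q\le\frac1M\sum_k y_k^q$. Multiplying by $M^q$ gives $(\sum_k y_k)^q\le M^{q-1}\sum_k y_k^q$.
\end{itemize}

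There is no real obstacle in this argument. The only points needing attention are the zero case, and checking that the exponents of $M$ come out as $p-1$ and $q-1$ after rescaling by $M^p$ and $M^q$. Reverting to the original indexing with $M=nm$ then gives exactly the stated inequalities.
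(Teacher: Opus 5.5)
Your proof is correct and complete. The paper gives no proof of this lemma and only cites Hardy--Littlewood--P\'olya and two control papers. Your argument is the standard one behind that citation, so it supplies exactly what the citation stands in for. You flatten to $M=nm$ terms, handle the zero case, and obtain the two $M$-free bounds from $z_k^p\ge z_k$ and $z_k^q\le z_k$ after normalizing by the sum. You then get the two bounds with $M^{p-1}$ and $M^{q-1}$ from Jensen with uniform weights applied to the concave map $t^p$ and the convex map $t^q$ on $[0,\infty)$.
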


\section{Proof of Theorem \ref{lemmat}}\label{pplemmat}


Define the following Lyapunov function candidate,
\begin{align}\label{v}
	V(t)=&\sum_{i=1}^{N}(\frac{\mu_1k_i}{2}\|\varsigma_i^2(t)\|_1+\frac{\mu_2k_i}{\alpha+1}\|\varsigma_i^{\alpha+1}(t)\|_1\nonumber\\
	&+\frac{\mu_3k_i}{\beta+1}\|\varsigma_i^{\beta+1}(t)\|_1),
\end{align}where $k_i$ is defined in Lemma \ref{lem1} (ii).

By taking DoS attacks into consideration, we will analyze two cases as follows.

Case I: $t\in \Pi_N(t_0,t)$.

From (\ref{obqq11}), the time derivative of $V(t)$ is expressed as follows,
\begin{align*}
		\dot V(t)\!=&
		\sum_{i=1}^{N}k_i(\mu_1\varsigma_i(t)\!+\!\mu_2k_i sig^{\alpha}(\varsigma_i(t))\!+\!\mu_3k_i sig^{\beta}(\varsigma_i(t)))^T\dot \varsigma_i(t)\nonumber\\
		=&\Phi^T(t)(K\otimes S)\varsigma(t)-\Phi^T(t)(K\mathcal H\otimes I_q)\Phi(t)\nonumber\\
		=&\Phi^T(t)(K\otimes S)\varsigma(t)-\frac{1}{2}\Phi^T(t)((\mathcal H^TK+K \mathcal H)\otimes I_q)\Phi(t).
\end{align*}By Young's inequality, one obtains 
	$\Phi^T(t)(K\otimes S)\varsigma(t)
	\le \tfrac{1}{2}\Phi^T(t)\Phi(t)
	+ \tfrac{1}{2}\|(K\otimes S)\|^{2}\varsigma^{T}(t)\varsigma(t)$,
	and Lemma \ref{lem1} (ii) ensures that 
	$\mathcal H^{T}K + K\mathcal H - 2I_{N} \ge 0$. Then, one has
\begin{align}\label{q5}
	\dot V(t)\leq&\frac{1}{2}\|(K\otimes S)\|^2\varsigma^T(t)\varsigma(t)-\frac{1}{2}\Phi^T(t)\Phi(t).
\end{align}
By recalling $\Phi_i(t)=\mu_1 \varsigma_i(t) + \mu_2 \text{sig}^\alpha(\varsigma_i(t)) + \mu_3 \text{sig}^\beta(\varsigma_i(t))$ with $\mu_1$, $\mu_2$, $\mu_3>0$,  one has
\begin{equation}\label{q6}
	\|\Phi_i(t)\|^2
	\geq \mu_1^2\|\varsigma_i(t)\|^2+\mu_2^2\|\text{sig}^\alpha(\varsigma_i(t))\|^2+\mu_3^2\|\text{sig}^\beta(\varsigma_i(t))\|^2.
\end{equation}
Via Lemma \ref{lem2}, one obtains the following two inequalities,
	\begin{align}
		\sum_{i=1}^N\|\text{sig}^\alpha(\varsigma_i(t))\|^2\geq (\|\varsigma(t)\|^2)^\alpha,\label{q7}\\
		\sum_i^N \|\text{sig}^\beta(\varsigma_i(t))\|^2\geq (Nq)^{1-\beta}(\|\varsigma(t)\|^2)^\beta.\label{q8}
	\end{align}
	Combining (\ref{q6}), (\ref{q7}) and (\ref{q8}) leads to
	\begin{align}\label{qp}
		\|\Phi(t)\|^2\geq& \mu_1^2\|\varsigma(t)\|^2+\mu_2^2(\|\varsigma(t)\|^2)^\alpha+\mu_3^2(Nq)^{1-\beta}(\|\varsigma(t)\|^2)^\beta.
\end{align}It follows from (\ref{q5}) and (\ref{qp}) that
\begin{align}\label{q10}
	\dot V(t)\leq&-\frac{\mu_1^2-\|K\otimes S\|}{2}\|\varsigma(t)\|^2-\frac{\mu_2^2}{2}(\|\varsigma(t)\|^2)^\alpha\nonumber\\
	&-\frac{\mu_3^2}{2}(\|\varsigma(t)\|^2)^\beta\nonumber\\
	\leq&-c_1((\|\varsigma(t)\|^2)^\beta+\|\varsigma(t)\|^2+(\|\varsigma(t)\|^2)^\alpha),
\end{align}where $c_1$ is defined below (\ref{to}).

Recall that $0<\alpha<1<\beta$. Applying Lemma~\ref{lem2} to \eqref{v} gives
\begin{align}\label{le1}
	V(t)\leq c_2(\|\varsigma(t)\|^2+(\|\varsigma(t)\|^2)^{\frac{\alpha+1}{2}}+ (\|\varsigma(t)\|^2)^{\frac{\beta+1}{2}}),
\end{align}where $c_2$ is defined below (\ref{to}).
Since $\alpha<\tfrac{\alpha+1}{2}<1$ and $1<\tfrac{\beta+1}{2}<\beta$, one can infer that
\begin{align}
	(\|\varsigma(t)\|^2)^{\frac{\alpha+1}{2}}\leq \|\varsigma(t)\|^2+(\|\varsigma(t)\|^2)^\alpha,\label{le2}\\
	(\|\varsigma(t)\|^2)^{\frac{\beta+1}{2}}\leq \|\varsigma(t)\|^2+(\|\varsigma(t)\|^2)^\beta.\label{le2.1}
\end{align}
Then, substituting (\ref{le2}) and (\ref{le2.1}) into (\ref{le1}) leads to
\begin{align}\label{Vc2}
	V(t)\leq c_2((\|\varsigma(t)\|^2)^\beta+ (\|\varsigma(t)\|^2)^\alpha+3\|\varsigma(t)\|^2).
\end{align}

Since $\tfrac{1}{\alpha}<\beta$, one has $\tfrac{(\alpha+1)\beta}{\beta+1}>1$ and $\tfrac{(\beta+1)\alpha}{\alpha+1}>1$. Then, applying Lemma~\ref{lem2} to \eqref{v} along with $\tfrac{2\beta}{\beta+1}>1$, $\beta>1$, $0<\tfrac{2\alpha}{\alpha+1}<1$ and $0<\alpha<1$, yields
\begin{eqnarray}
	V^{\frac{2\beta}{\beta+1}}\!(t)\!\leq\!
c_3((\|\varsigma(t)\|^2)^{\frac{2\beta}{\beta+1}}\!\!+\!(\|\varsigma(t)\|^2)^{\frac{\beta(\alpha+1)}{\beta+1}}\!\!+\!(\|\varsigma(t)\|^2)^\beta),\!\label{le3}\\
		V^{\frac{2\alpha}{\alpha+1}}\!(t)\!\leq\! c_4 ((\|\varsigma(t)\|^2)^{\frac{2\alpha}{\alpha+1}}\!+\!(\|\varsigma(t)\|^2)^{\alpha}\!+\!(\|\varsigma(t)\|^2)^{\frac{(\beta+1)\alpha}{\alpha+1}}),\!\label{le4}
\end{eqnarray}
where $c_3$ and $c_4$ are defined below (\ref{to}).
By $1<\tfrac{2\beta}{\beta+1}<\beta$, $\alpha<$ $\tfrac{\beta(\alpha+1)}{\beta+1}< \beta$, $\alpha<\tfrac{2\alpha}{ \alpha+1}<1$ and $\alpha<\tfrac{(\beta+1)\alpha}{\alpha+1}<\beta$, 
one has 
\begin{align}
	(\|\varsigma(t)\|^2)^{\frac{2\beta}{\beta+1}}\leq& \|\varsigma(t)\|^2+(\|\varsigma(t)\|^2)^\beta,\label{le5}\\
	(\|\varsigma(t)\|^2)^{\frac{\beta(\alpha+1)}{\beta+1}}\leq& (\|\varsigma(t)\|^2)^\alpha+(\|\varsigma(t)\|^2)^\beta,\label{le6}\\
	(\|\varsigma(t)\|^2)^{\frac{2\alpha}{\alpha+1}}\leq& \|\varsigma(t)\|^2+(\|\varsigma(t)\|^2)^\alpha,\label{le7}\\
	(\|\varsigma(t)\|^2)^{\frac{(\beta+1)\alpha}{\alpha+1}}\leq& (\|\varsigma(t)\|^2)^\alpha+(\|\varsigma(t)\|^2)^\beta.\label{le8}
\end{align}
Substituting (\ref{le5}) and (\ref{le6}) into (\ref{le3}), and substituting (\ref{le7}) and (\ref{le8}) into (\ref{le4}) yield respectively,
\begin{align}\label{q13}
	V^{\frac{2\beta}{\beta+1}}(t)\leq& c_3(3(\|\varsigma(t)\|^2)^\beta+\|\varsigma(t)\|^2+(\|\varsigma(t)\|^2)^{\alpha}),
\end{align}
\begin{align}\label{q16}
	V^{\frac{2\alpha}{\alpha+1}}(t)\leq&c_4 ((\|\varsigma(t)\|^2)^{\beta}+3(\|\varsigma(t)\|^2)^{\alpha}+\|\varsigma(t)\|^2).
\end{align}

According to (\ref{Vc2}), (\ref{q13}) and (\ref{q16}), one has
\begin{align}\label{q17}
	&	\frac{1}{c_2}V(t)+\frac{1}{c_3}V^{\frac{2\beta}{\beta+1}}(t)+\frac{1}{c_4}V^{\frac{2\alpha}{\alpha+1}}(t)
\nonumber\\
	&\leq 5((\|\varsigma(t)\|^2)^{\alpha}+\|\varsigma(t)\|^2+(\|\varsigma(t)\|^2)^{\beta}).
\end{align}
Substituting (\ref{q17}) into (\ref{q10}) leads to
\begin{align}\label{q18}
	\dot V(t)\leq&-\frac{c_1}{5c_2}V(t)-\frac{c_1}{5c_3}V^{\frac{2\beta}{\beta+1}}(t)-\frac{c_1}{5c_4}V^{\frac{2\alpha}{\alpha+1}}(t).
\end{align}

Case II: $t\in \Pi_D(t_0,t)$.

Taking the time derivative of $V(t)$ along (\ref{obqq12}) yields
\begin{align*}
	\dot V(t)	
	\!=&\sum_{i=1}^{N}k_i(\mu_1\varsigma_i(t)\!+\!\mu_2\text{sig}^\alpha(\varsigma_i(t))\!+\!\mu_3\text{sig}^\beta(\varsigma_i(t)))^TS\varsigma_i(t)\nonumber\\
	\leq&|\!\sum_{i=1}^{N}\!k_i(\mu_1\varsigma_i(t)\!+\!\mu_2\text{sig}^\alpha(\varsigma_i(t))\!+\!\mu_3\text{sig}^\beta(\varsigma_i(t)))^TS\varsigma_i(t)|.
\end{align*}
Then, by the triangle inequality, one further has
\begin{align}\label{q19}
		\dot V(t)	\leq&\sum_{i=1}^{N}k_i(\mu_1|\varsigma_i^T(t)S\varsigma_i(t)|+\mu_2|(\text{sig}^{\alpha}(\varsigma_i(t)))^TS\varsigma_i(t)|\nonumber\\
	&+\mu_3|(\text{sig}^{\beta}(\varsigma_i(t)))^TS\varsigma_i(t)|).
\end{align}Note that
\begin{align}\label{q20}
	|\varsigma_i^T(t)S\varsigma_i(t)|\leq \|S\|\|\varsigma_i^2(t)\|_1.
\end{align}
In addition, via Lemma \ref{lem2}, one can infer that
\begin{align}
	|(\text{sig}^{\alpha}(\varsigma_i(t)))^TS\varsigma_i(t)|\leq q^{\frac{1-\alpha}{2}}\|S\|\|\varsigma_i^{\alpha+1}(t)\|_1,\label{q21}\\
	|(\text{sig}^{\beta}(\varsigma_i(t)))^TS\varsigma_i(t)|\leq q^{\frac{\beta-1}{2}}\|S\|\|\varsigma_i^{\beta+1}(t)\|_1.\label{q22}
\end{align}
Substituting (\ref{q20})--(\ref{q22}) into (\ref{q19}) leads to
\begin{align}\label{q23}
	\dot V(t)\leq&\|S\|(\sum_{i=1}^{N}k_i (\mu_1\|\varsigma_i^2(t)\|_1+q^{\frac{1-\alpha}{2}}\mu_2\|\varsigma_i^{\alpha+1}(t)\|_1\nonumber\\
		&+q^{\frac{\beta-1}{2}}\mu_3\|\varsigma_i^{\beta+1}(t)\|_1))\nonumber\\
	\leq&c_5V(t),
\end{align}where $c_5$ is defined below (\ref{to}).

As stated in Definition \ref{FTstable} (2), to show the fixed-time stability of the origin $\varsigma(t)=0$, i.e., the equilibrium of system (\ref{obq1}), 
it is necessary to first demonstrate its global finite-time stability. 
According to Definition \ref{FTstable} (1), the proof of global finite-time stability comprises two parts: Part I establishes global finite-time convergence, and Part II shows Lyapunov stability.

Part I: The global finite-time convergence is proven by the following two scenarios.

Scenario 1: $V(t_0)>1$.

The convergence property of $V(t)$ will be demonstrated by two steps in this scenario. In Step 1, we show that there is a $\bar t_o$ for which $V(t) \le 1$ holds, for all $t \ge \bar t_o$. Subsequently, in Step 2, we establish the existence of $t_o$ such that $V(t) = 0$, for all $t \ge t_o$.

Step 1: By invoking (\ref{q18}), one has 
\begin{align}\label{obdV}
	\dot V(t)\leq
	-\frac{c_1}{5c_2}V(t)-\frac{c_1}{5c_3}V^{\frac{2\beta}{\beta+1}}(t),
	t\in [t_{k-1}, t^a_k).
\end{align}
Then, for $t\in [t_{k-1}, t^a_k)$, one can deduce from (\ref{obdV}) that 
\begin{align}\label{c1}
	V^{\frac{1-\beta}{\beta+1}}(t)\geq (V^{\frac{1-\beta}{\beta+1}}(t_{k-1})+\frac{c_2}{c_3})e^{\tilde c_1(t-t_{k-1})}-\frac{c_2}{c_3},
\end{align}where $\tilde c_1=\tfrac{c_1(\beta-1)}{5c_2(\beta+1)}$.

For $t\in [t^a_k,t_k)$, one can deduce from (\ref{q23}) that
\begin{align}\label{c2}
	V^{\frac{1-\beta}{\beta+1}}(t)\geq V^{\frac{1-\beta}{\beta+1}}(t_k^a)e^{-\tilde c_2(t-t_k^a)},
\end{align}where $\tilde c_2=\tfrac{c_5(\beta-1)}{\beta+1}$.

From (\ref{c1}) and (\ref{c2}), one has for $t\in [t_{k-1}, t^a_k)$,
\begin{align}\label{obdW2}
		V^{\frac{1-\beta}{\beta+1}}(t)
		\geq&(V^{\frac{1-\beta}{\beta+1}}(t_{k-1})+\frac{c_2}{c_3})e^{\tilde c_1(t-t_{k-1})}-\frac{c_2}{c_3}\nonumber\\
		\geq& V^{\frac{1-\beta}{\beta+1}}(t_{k-1}^a)e^{-\tilde c_2(t_{k-1}-t_{k-1}^a)+\tilde c_1(t-t_{k-1})}\nonumber\\
		&+\frac{c_2}{c_3}e^{\tilde c_1(t-t_{k-1})}-\frac{c_2}{c_3}\nonumber\\
		\geq&(V^{\frac{1-\beta}{\beta+1}}(t_{k-2})+\frac{c_2}{c_3})e^{-\tilde c_2(t_{k-1}-t_{k-1}^a)}\nonumber\\
		&\times e^{\tilde c_1((t-t_{k-1})+(t_{k-1}^a-t_{k-2}))}\nonumber\\
		&-\frac{c_2}{c_3}e^{-\tilde c_2(t_{k-1}-t_{k-1}^a)+\tilde c_1(t-t_{k-1})}\nonumber\\
		&+\frac{c_2}{c_3}e^{\tilde c_1(t-t_{k-1})}-\frac{c_2}{c_3}\nonumber\\
		\geq& \cdots\nonumber\\
		\geq& (V^{\frac{1-\beta}{\beta+1}}(t_0)+\frac{c_2}{c_3})e^{-\tilde c_2(\sum_{l=1}^{k-1}(t_l-t_l^a))}\nonumber\\
		&\times e^{\tilde c_1((t-t_{k-1})+\sum_{l=1}^{k-1}(t_{l}^a-t_{l-1}))}-\frac{c_2}{c_3};
\end{align}
and for $t\in [ t^a_k,t_k)$,
\begin{align}\label{obdW3}
		V^{\frac{1-\beta}{\beta+1}}(t)
		\geq& V^{\frac{1-\beta}{\beta+1}}(t_k^a)e^{-\tilde c_2(t-t_k^a)}\nonumber\\
		\geq& (V^{\frac{1-\beta}{\beta+1}}(t_{k-1})+\frac{c_2}{c_3})e^{-\tilde c_2(t-t_k^a)+\tilde c_1(t_k^a-t_{k-1})}\nonumber\\
		&-\frac{c_2}{c_3}e^{-\tilde c_2(t-t_k^a)}\nonumber\\
		\geq& V^{\frac{1-\beta}{\beta+1}}(t_{k-1}^a)e^{-\tilde c_2((t-t_k^a)+(t_{k-1}-t_{k-1}^a))+\tilde c_1(t_k^a-t_{k-1})}\nonumber\\
		&+\frac{c_2}{c_3}e^{-\tilde c_2(t-t_k^a)+\tilde c_1(t_k^a-t_{k-1})}-\frac{c_2}{c_3}e^{-\tilde c_2(t-t_k^a)}\nonumber\\
		\geq& (V^{\frac{1-\beta}{\beta+1}}(t_{k-2})+\frac{c_2}{c_3})e^{-\tilde c_2((t-t_k^a)+(t_{k-1}-t_{k-1}^a))}\nonumber\\
		&\times e^{\tilde c_1((t_k^a-t_{k-1})+(t_{k-1}^a-t_{k-2}))}\nonumber\\
		&-\frac{c_2}{c_3}e^{-\tilde c_2((t-t_k^a)+(t_{k-1}-t_{k-1}^a))+\tilde c_1(t_k^a-t_{k-1})}\nonumber\\
		&+\frac{c_2}{c_3}e^{-\tilde c_2(t-t_k^a)+\tilde c_1(t_k^a-t_{k-1})}-\frac{c_2}{c_3}e^{-\tilde c_2(t-t_k^a)}\nonumber\\
		\geq&\cdots\nonumber\\
		\geq&(V^{\frac{1-\beta}{\beta+1}}(t_0)+\frac{c_2}{c_3})e^{-\tilde c_2((t-t_k^a)+\sum_{l=1}^{k-1}(t_l-t_l^a))}\nonumber\\
		&\times e^{\tilde c_1(\sum_{l=1}^k(t_l^a-t_{l-1}))}-\frac{c_2}{c_3}e^{-\tilde c_2(t-t_k^a)}.
\end{align}
Under Assumption \ref{assdos}, combining (\ref{obdW2}) and (\ref{obdW3}) leads to
\begin{align}\label{obdW5}
	V^{\frac{1-\beta}{\beta+1}}(t)\geq&\frac{c_2}{c_3}e^{(\tilde c_1(p_d-1)-\tilde c_2)\frac{(t-t_0)}{p_d}-\tilde c_1\nu_d-\tilde c_2\nu_d}-\frac{c_2}{c_3}.
\end{align}
Denote 
\begin{align}
	g_1(t)=\frac{c_2}{c_3}e^{(\tilde c_1(p_d-1)-\tilde c_2)\frac{(t-t_0)}{p_d}-\tilde c_1\nu_d-\tilde c_2\nu_d}-\frac{c_2}{c_3}-1.
\end{align}
Then, one has $g_1(t_0)<0$ and 
	\begin{align}
		\dot g_1(t)=\frac{c_2(\tilde c_1(p_d-1)-\tilde c_2)}{c_3p_d} e^{(\tilde c_1(p_d-1)-\tilde c_2)\frac{(t-t_0)}{p_d}-\tilde c_1\nu_d-\tilde c_2\nu_d}.
\end{align}
Together with condition (i), one can infer that there is $\bar t_o$ such that $g_1(t)\geq 0$, for all $t\geq \bar t_o$, in which $\bar t_o$ is determined by $g_1(t)=0$.
Thus, $V^{\frac{1-\beta}{\beta+1}}(t)\geq 1$, for all $t\geq \bar t_o$. Since $\tfrac{1-\beta}{\beta+1}<0$, it can be deduced that $V(t)\leq 1$, for all $t\geq \bar t_o$.

Step 2: By invoking (\ref{q18}), one has for $t\in [t_{k-1}, t^a_k)$,
\begin{align}\label{obdV1}\
	\dot V(t)\leq-\frac{c_1}{5c_4}V^{\frac{2\alpha}{\alpha+1}}(t),
\end{align}
which leads to
\begin{align}\label{obdV2}
	V^{\frac{1-\alpha}{\alpha+1}}(t)\leq
	V^{\frac{1-\alpha}{\alpha+1}}(t_{k-1})-\hat c_1(t-t_{k-1}), 
\end{align}
where $\hat c_1=\tfrac{c_1(1-\alpha)}{5c_4(\alpha+1)}$.

From (\ref{q23}), one has for $t\in [t_k^a,t_{k})$,
\begin{align}\label{obndV2}
	V^{\frac{1-\alpha}{\alpha+1}}(t)\leq V^{\frac{1-\alpha}{\alpha+1}}(t_{k}^a)e^{\hat c_2(t-t_k^a)}, 
\end{align}where $\hat c_2=\tfrac{c_5(1-\alpha)}{\alpha+1}$.

Without loss of generality, denote $\bar t_0=\bar t_o$,  and define $[\bar t_{k-1}, \bar t_k^a)$ as the $k$th interval without DoS attacks, and $[\bar t_k^a,\bar t_k)$ as the $k$th interval with DoS attacks.
From (\ref{obdV2}) and (\ref{obndV2}), one has for $[\bar t_{k-1}, \bar t_k^a)$,
\begin{align}\label{W3}
		V^{\frac{1-\alpha}{\alpha+1}}(t)\leq& V^{\frac{1-\alpha}{\alpha+1}}(\bar t_{k-1})-\hat c_1(t-\bar t_{k-1})\nonumber\\
		\leq&V^{\frac{1-\alpha}{\alpha+1}}(\bar t_{k-1}^a)e^{\hat c_2(\bar t_{k-1}-\bar t_{k-1}^a)}-\hat c_1(t-\bar t_{k-1})\nonumber\\
		\leq&V^{\frac{1-\alpha}{\alpha+1}}(\bar t_{k-2})e^{\hat c_2(\bar t_{k-1}-\bar t_{k-1}^a)}\nonumber\\
		&-\hat c_1(\bar t_{k-1}^a-\bar t_{k-2})e^{\hat c_2(\bar t_{k-1}-\bar t_{k-1}^a)}-\hat c_1(t-\bar t_{k-1})\nonumber\\
		\leq&\cdots\nonumber\\
		\leq&V^{\frac{1-\alpha}{\alpha+1}}(\bar t_0)e^{\hat c_2(\sum_{l=1}^{k-1}( \bar t_l-\bar t_l^a))}\nonumber\\
		&-\hat c_1((t-\bar t_{k-1})+\sum_{l=1}^{k-1}(\bar t_{l}^a-\bar t_{l-1}));
\end{align}
and for $[\bar t_k^a,\bar t_{k})$,
\begin{align}\label{W4}
		V^{\frac{1-\alpha}{\alpha+1}}(t)\leq&V^{\frac{1-\alpha}{\alpha+1}}(\bar t_{k}^a)e^{\hat c_2(t-\bar t_k^a)}\nonumber\\
		\leq &V^{\frac{1-\alpha}{\alpha+1}}(\bar t_{k-1})e^{\hat c_2(t-\bar t_k^a)}-\hat c_1(\bar t_{k}^a-\bar t_{k-1})e^{\hat c_2(t-\bar t_k^a)}\nonumber\\
		\leq&V^{\frac{1-\alpha}{\alpha+1}}(\bar t_{k-1}^a)e^{\hat c_2((t-\bar t_k^a)+(\bar t_{k-1}-\bar t_{k-1}^a))}\nonumber\\
		&-\hat c_1(\bar t_{k}^a-\bar t_{k-1})e^{\hat c_2(t-\bar t_k^a)}\nonumber\\
		\leq&V^{\frac{1-\alpha}{\alpha+1}}(\bar t_{k-2})e^{\hat c_2((t-\bar t_k^a)+(\bar t_{k-1}-\bar t_{k-1}^a))}\nonumber\\
		&-\hat c_1(\bar t_{k-1}^a-\bar t_{k-2})e^{\hat c_2((t-\bar t_k^a)+(\bar t_{k-1}-\bar t_{k-1}^a))}\nonumber\\
		&-\hat c_1(\bar t_{k}^a-\bar t_{k-1})e^{\hat c_2(t-\bar t_k^a)}\nonumber\\
		\leq&\cdots\nonumber\\
		\leq&V^{\frac{1-\alpha}{\alpha+1}}(\bar t_{0})e^{\hat c_2((t-\bar t_k^a)+\sum_{l=1}^{k-1}(\bar t_{l}-\bar t_{l}^a))}\nonumber\\
		&-\hat c_1(\sum_{l=1}^{k}(\bar t_{l}^a-\bar t_{l-1})).
\end{align}
Since $V^{\frac{1-\alpha}{\alpha+1}}(\bar t_{0})\leq 1$,
combining (\ref{W3}) and (\ref{W4}) yields
\begin{align}\label{W5}
	V^{\frac{1-\alpha}{\alpha+1}}(t)\leq&e^{\hat c_2|\Pi_D(\bar t_0,t)|}-\hat c_1(t-\bar t_0-|\Pi_D(\bar t_0,t)|).
\end{align}
Under Assumption \ref{assdos},  one can further deduce that
\begin{align}\label{W6}
	V^{\frac{1-\alpha}{\alpha+1}}(t)\leq g_2(t),
\end{align}
where 
\begin{align}\label{g2}
	g_2(t)=e^{\hat c_2(\frac{t-\bar t_0}{p_d}+\nu_d)}-\frac{\hat c_1(p_d-1)}{p_d}(t-\bar t_0)+\hat c_1\nu_d.
\end{align} 
Then, one has
	\begin{align}
		g_2(\bar t_0)&=e^{\hat c_2\nu_d}+\hat c_1\nu_d,\\
		\dot g_2(t)&=\frac{\hat c_2}{p_d}e^{\hat c_2(\frac{t-\bar t_0}{p_d}+\nu_d)}-\frac{\hat c_1(p_d-1)}{p_d},\\
		\ddot g_2(t)&=(\frac{\hat c_2}{p_d})^2e^{\hat c_2(\frac{t-\bar t_0}{p_d}+\nu_d)}.
\end{align}
Note that $	g_2(\bar t_0)>0$. It follows from conditions (ii)-(iii)
that $g_2(t_o)\leq 0$ is the minimum value of $g_2(t)$, which implies that $V^{\frac{1-\alpha}{\alpha+1}}(t_o)\leq 0$. Together with $V(t)\geq 0$, $\forall t\geq 0$, one can infer that $V(t_o)=0$. The finite-time convergence of $V(t)$ is established for this scenario.

Scenario 2: $V(t_0)\leq 1$.

The global finite-convergence of $V(t)$ for this scenario follows directly from the second step of the previous scenario.

Combining the results in these two scenarios, one can conclude that $\lim\limits_{t\to t_o}\varsigma(t)=0$, $\forall \varsigma(t_0)\in \mathbb R^q$. Thus, the origin $\varsigma(t)=0$ is globally finite-time convergent.

Part II: To show the Lyapunov stability.

Combining (\ref{q18}) and (\ref{q23}), one has 
\begin{align}
	\dot V(t)\leq c_5V(t), \forall t\geq t_0.
	\end{align}
From this inequality, one has for all $t\in[t_0, t_o)$,
\begin{align}
	V(t)\leq e^{c_5(t_o-t_0)}V(t_0).
\end{align}
This implies that $V(t)$ remains bounded over $[t_0, t_o)$. 
Together with the result derived in Part I that $V(t_o)=0$,
one can infer that for each open neighborhood $\mathcal{W}_\varsigma$ of the origin, there is an open neighborhood $\mathcal{W}_0 \subset \mathbb R^n$ of the origin such that for any $\varsigma(t_0) \in \mathcal{W}_0 \setminus \{0\}$, $\varsigma(t) \in \mathcal{W}_\varsigma$ holds, for all $t \in [0, t_o)$. The Lyapunov stability of $\varsigma(t)=0$ is thus obtained in the sense of Definition \ref{finiteq} (1) (b).

According to the results derived in Part I and Part II, along with Definition \ref{finiteq} (1), one can conclude that the origin $\varsigma(t)=0$ is globally finite-time stable.
Note that $t_o$ is independent of the initial states. Therefore, the origin $\varsigma(t)=0$ is globally fixed-time stable. 

Recall that $\varsigma(t)=(\mathcal H\otimes I_q)\tilde \eta(t)$ and $(\mathcal H\otimes I_q)$ is invertible from Lemma \ref{lem1} (i). Consequently, we conclude that $\tilde \eta_i(t)=0$, for all $t\geq t_o$. This completes the proof.
\end{appendices}

\renewcommand\refname{References}
\footnotesize
\scriptsize
\bibliographystyle{IEEEtran}
\bibliography{IEEEabrv,referencesuncertainleader}

%

\end{document}